\def\astarmsn{a^{*}_{\mbox{\tiny{$m,\!s^{n}$}}}}
\def\Astarmsn{A^{*}_{\mbox{\tiny{$m,\!s^{n}$}}}}
\def\bI{\boldsymbol{I}}
\def\iotapl{\iota_{\mbox{{\tiny$\oplus$}}}}
\def\apl{a_{\mbox{{\tiny$\oplus$}}}}
\def\hatapl{\hata_{\mbox{{\tiny$\oplus$}}}}
\def\hatapl{\hata_{\mbox{{\tiny$\oplus$}}}}
\def\ttp{\mathtt{p}}
\def\ttt{\mathtt{t}}
\def\ttT{\mathtt{T}}
\def\hata{\hat{a}}
\def\hatw{\hat{w}}
\def\hatm{\hat{m}}
\def\hum{\hat{\underline{m}}}
\def\hatv{\hat{v}}
\global\long\def\11{\mathbbm{1}}
\def\ulinetau{\underline{\tau}}
\def\ulines{\underline{s}}
\def\ulineS{\underline{S}}
\def\ulinem{\underline{m}}
\def\ulinee{\underline{e}}
\def\ulineR{\underline{R}}
\def\ulineX{\underline{X}}
\def\ulinex{\underline{x}}
\def\ulineU{\underline{U}}
\def\ulineu{\underline{u}}
\def\ulineCalS{\underline{\mathcal{S}}}
\def\ulineCalX{\underline{\mathcal{X}}}
\def\ulineCalU{\underline{\mathcal{U}}}
\def\ulineCalV{\underline{\mathcal{V}}}
\def\ulineCalM{\underline{\mathcal{M}}}
\def\ulineX{\underline{X}}
\def\ulineV{\underline{V}}
\def\ulinev{\underline{v}}
\def\olinekappa{\overline{\kappa}}
\def\3To1BC{$3-$to$-1$}
\def\define{:{=}~}
\def\naturals{\mathbb{N}}
\def\hatm{\hat{m}}
\def\Expectation{\mathbb{E}}
\def\fieldq{\mathcal{F}_{q}}
\newcommand{\msout}[1]{\text{\sout{\ensuremath{#1}}}}
\newif\ifProofForORDBC
\def\parsec{\par\noindent}
\def\med{\medskip\parsec}
\def\tildem{\tilde{m}}
\def\CalA{\mathcal{A}}
\def\CalE{\mathcal{E}}
\def\CalF{\mathcal{F}}
\def\CalH{\mathcal{H}}
\def\CalL{\mathcal{L}}
\def\CalM{\mathcal{M}}
\def\CalP{\mathcal{P}}
\def\CalS{\mathcal{S}}
\def\CalU{\mathcal{U}}
\def\CalV{\mathcal{V}}
\def\CalW{\mathcal{W}}
\def\CalX{\mathcal{X}}
\def\CalY{\mathcal{Y}}
\def\11{\mathbbm{1}}
\def\3To1BC{$3-$to$-1$}
\def\define{:{=}~}
\def\naturals{\mathbb{N}}
\def\Expectation{\mathbb{E}}
\def\define{\mathrel{\ensurestackMath{\stackon[1pt]{=}{\scriptstyle\Delta}}}}
\newif\ifJournal
\newtheorem{theorem}{Theorem}
\newcommand{\comment}[1]{}
\begin{document}

\sloppy
\newtheorem{remark}{\it Remark}
\newtheorem{thm}{Thm.}
\newtheorem{corollary}{Corollary}
\newtheorem{definition}{Defn.}
\newtheorem{lemma}{Lemma}
\newtheorem{example}{Ex.}
\newtheorem{prop}{Prop.}
\newtheorem{fact}{Fact.}

\title{\huge Communicating over a Classical-Quantum MAC with State Information Distributed at Senders}

\author{
\IEEEauthorblockN{Arun Padakandla\\}
\IEEEauthorblockA{University of Tennessee, USA \\
Email: arunpr@utk.edu
}
}
\maketitle

\begin{abstract}
 We consider the problem of communicating over a classical-quantum (CQ) multiple access channel with classical states non-causally available at the transmitter, henceforth referred to as a QMSTx. QMSTx is a classical-quantum multiple access analogue of the channel studied \cite{1980MMPCIT_GelPin} by Gelfand and Pinsker in 1980. We undertake a Shannon-theoretic study and focus on the problem of characterizing inner bounds to the capacity region of a QMSTx. We propose a new coding scheme based on \textit{union coset codes} - codes possessing algebraic closure properties and derive a new inner bound that subsumes the largest known inner bound based on IID random coding. We identify examples for which the derived inner bound is strictly larger.
\end{abstract}

\section{Introduction}
\label{Sec:Introduction}
Consider the scenario of communicating over a classical-quantum (CQ) multiple access channel with classical states (QMSTx) depicted in Fig.~\ref{Fig:CQMACDSTx}. $S_{1},S_{2}$ model a pair of channel states that are jointly distributed and whose evolution across time is IID. Transmitter (Tx) $j$ is provided the entire realization of the state $S_{j}$ non-causally and is required to communicate its message $M_{j}$ to a receiver (Rx) that is uninformed of the states. If the channel is in state $s_{1},s_{2}$ and Txs $1,2$ choose input symbols $x_{1},x_{2}$ respectively, then the Rx is provided the quantum state $\rho_{x_{1}x_{2}s_{1}s_{2}}$. Our focus is on the problem of characterizing an inner bound to the capacity region of a general QMSTx. In the sequel, we describe our motivation and contributions.

Our motivation in addressing this problem is four fold. Firstly, QMSTx is a network for which the conventional long established technique of proving inner bounds via IID random codes, also referred to herein as unstructured codes, is sub-optimal. In this article, we design a coding scheme based on \textit{union coset codes} (UCC) - codes possessing algebraic closure properties - that strictly outperforms the best known coding scheme based on IID codes. Specifically, we analyze performance of the designed coding scheme to derive an inner bound (Thms.~\ref{Thm:CQMACDSTxFirstStepTheorem}, \ref{Thm:CombineIIDandUCC}) to the capacity region of a QMSTx that, not only subsumes the largest inner bound via unstructured codes, but strictly enlarges the same for identified examples (Ex.~\ref{Ex:BDDQMSTx}, Prop.~\ref{Prop:LinearCodesOutperformIID}). These findings build on our earlier work \cite{201710TIT_PadPra, BkNIT_PraPadShi} and maybe viewed as another step \cite{202107ISIT_AnwPadPra3CQIC, 202107ISIT_AnwPadPraComMAC} in our pursuit of designing coding schemes based on coset codes for network CQ communication.

While the utility of coset codes have been established in several networks \cite{197903TIT_KorMar,200710TIT_NazGas, BkNIT_PraPadShi, 201804TIT_PadPra, 202107ISIT_AnwPadPraComMAC, 202107ISIT_AnwPadPra3CQIC}, their use for a QMSTx is unique and leads us to our second motivation. Coset codes have facilitated higher rates in communication scenarios wherein a compressive bivariate function of the messages or codewords have to be decoded. For instance, on both the $3-$user interference \cite{201603TIT_PadSahPra, 202107ISIT_AnwPadPra3CQIC} and broadcast channels \cite{201804TIT_PadPra}, coset codes enable efficient decoding of the bivariate interference. QMSTx is a CQ MAC wherein both messages need to be decoded and decoding a compressive bivariate function of either the codewords or the messages can lead to obfuscation of the messages. Indeed, coset codes have no role in communication over a CQ-MAC without Tx states. It is therefore natural to question the utility of structured codes in communicating over a QMSTx. A second motivation of our work is to demonstrate how algebraic closure properties can be exploited to \textit{efficiently sieve relevant information} and thereby facilitate enhanced communication over a QMSTx. We illustrate this phenomena in the context of an example (Ex.~\ref{Ex:BDDQMSTx}) and a self contained discussion (Secs.~\ref{SubSec:BDDQMSTx}, \ref{SubSec:RoleOfAlgebraicCodes}). In particular, Sec.~\ref{Sec:IdeaSection} enables us convey the main ideas of this article.

\begin{figure}
    \centering
    \includegraphics[width=4in]{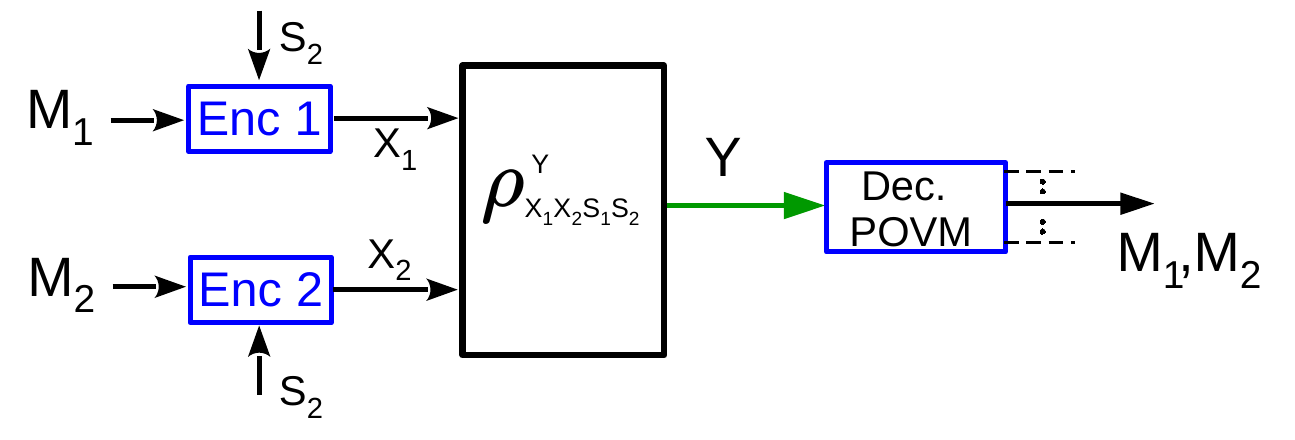}
    \vspace{-0.35in}
    \caption{}
    \label{Fig:CQMACDSTx}
        \vspace{-0.25in}
\end{figure}

Thirdly, this study enables us enrich the family of coset codes for CQ communication beyond nested coset codes (NCC) \cite{202107ISIT_AnwPadPraComMAC, 202107ISIT_AnwPadPra3CQIC} and partitioned coset codes (PCC) \cite{202202arXiv_Pad3CQBC} studied recently. As elaborated in \cite{BkNIT_PraPadShi} and recent works \cite{201706ISIT_HeiShiPra, 201910TIT_HeiShiPra}, coding schemes based on NCC or PCC designed for a classical analogue of a QMSTx, i.e., a classical MAC with states, can be strictly inferior to a coding scheme based on UCC. We have taken this cue and designed UCC based coding schemes and our results in Sec.~\ref{Sec:CQPTPSTxWithUCCs} also prove that UCCs achieve capacity of a single sender CQ channel. Lastly, our findings maybe viewed as developing new strategies to handle various network scenarios arising in an eventual quantum communication network.

Our presentation is organized in a modular fashion. In Sec.~\ref{Sec:IdeaSection}, we illustrate the main ideas of our work through a discussion in the context of a carefully chosen example (Ex.~\ref{Ex:BDDQMSTx}). A general coding scheme for a QMSTx consists of two layers - unstructured codes and UCC. In order to illustrate the new elements in a simplified setting, we present a simplified coding scheme involving only the UCC layer in Sec.~\ref{Sec:Step1InnerBound}, wherein we provide a detailed description and proof steps. In Sec.~\ref{Sec:IIDAndCosetCodes}, we present our inner bound that comprises of both unstructured and UCC layers. In Sec.~\ref{Sec:CQPTPSTxWithUCCs}, we prove that a coding scheme based on UCCs can achieve the best known single-letter inner bound to the capacity region of a single sender version of a QMSTx, i.e., single Tx, single Rx classical-quantum channel with random classical channel states available only at the Tx, referred to therein as a QSTx. In this article, we focus on conveying the ideas and techniques developed. 

The study of channels with Tx state information traces its roots back to Shannon \cite{195804IBMJRD_Sha} and has had considerable influence on other problems. Indeed, Gelfand and Pinkser's coding scheme \cite{1980MMPCIT_GelPin} for the classical single Tx channel with states, henceforth referred to as the Gelfand-Pinsker channel, forms the core of Marton's coding \cite{197905TIT_Mar} for the broadcast channel. Recently, Boche, Cai and N\"otzel \cite{201604JPA_BocCaiNot} studied the CQ analogue of the Gelfand-Pinsker channel and proved achievability of a corresponding inner bound. Their work exploits the method of types and the findings by N\"otzel\cite{201310JPA_Not} in proving achievability of the inner bound. Moreover, their work \cite{201604JPA_BocCaiNot} highlights the difference between the causal and the non-causal availability of state information at the Tx in regards to the single-letterization of the capacity. Our focus is on designing a new coding scheme and characterizing its performance via a single-letter expression. We have not commented on optimality of the bounds derived herein.

\section{Preliminaries and Problem Statement}
\label{Sec:Preliminaries}
For $n\in \mathbb{N}$, $[n] \define \left\{1,\cdots,n \right\}$. $\fieldq$ denotes the finite field of size $q$ and $\oplus_{q}$ denotes addition within $\fieldq$. For a Hilbert space $\mathcal{H}$, $\mathcal{L}(\mathcal{H}),\mathcal{P}(\mathcal{H})$ and $\mathcal{D}(\mathcal{H})$ denote the collection of linear, positive and density operators acting on $\mathcal{H}$ respectively. We let an \underline{underline} denote an appropriate aggregation of pairs of objects. For example, $\ulineCalU \define \CalU_{1}\times \CalU_{2}$ denotes the Cartesian product for sets, $\ulinex\define(x_{1},x_{2}) \in \ulineCalX$ and  $\ulinex^{n}\define(x_{1}^{n},x_{2}^{n})$. The specific aggregation will be clear from context. We abbreviate probability mass function as PMF.

Consider a (generic) \textit{QMSTx} specified through (i) two finite input sets $\CalX_{1},\CalX_{2}$, (ii) two finite sets $\CalS_{1},\CalS_{2}$ of states, (iii) a PMF $\ttp_{\ulineS}(\cdot)$ on $\ulineCalS$, (iii) a collection $(\rho_{\ulinex\ulines} \define \rho_{x_{1}x_{2}s_{1}s_{2}}\in \mathcal{D}(\mathcal{H}_{Y}): (\ulinex,\ulines) \in \ulineCalX\times \ulineCalS )$ of density operators and (iv) cost functions $\kappa_{j} :\CalX_{j} \times \CalS_{j} \rightarrow [0,\infty)$ for $j\in [2]$. The cost function is additive, i.e., having observed the state sequence $s_{j}^{n}$ the cost incurred by sender $j$ in preparing the state $\otimes_{t=1}^{n}\rho_{\ulinex_{t}\ulines_{t}}$ is $\olinekappa_{j}(x_{j}^{n},s_{j}^{n}) \define \frac{1}{n}\sum_{t=1}^{n}\kappa_{j}(x_{jt},s_{jt})$. Reliable communication on a QMStx entails identifying a code. 
\begin{definition}
\label{Defn:CQMACDSTxCode}
An $(n,\ulineCalM,\ulinee,\lambda)$ QMSTx code consists of two message index sets $\CalM_{j} : j \in [2]$, two encoder maps $e_{j}:[\CalM_{j}] \times \CalS_{j}^{n} \rightarrow \CalX_{j}^{n}$ and a decoder POVM $\lambda \define \{ \lambda_{\ulinem}=\lambda_{m_{1},m_{2}} \in \CalP(\CalH^{\otimes n}) : \ulinem \in \ulineCalM\}$. The average error probability of the code is
\begin{eqnarray}
 \label{Eqn:AvgErrorProb}
 \overline{\xi}(\ulinee,\lambda) \define 1-\frac{1}{|\ulineCalM|}\sum_{\ulinem \in \ulineCalM}\sum_{\ulines^{n} \in \ulineCalS^{n}}\ttp^{n}_{\ulineS}(\ulines^{n})\tr(\lambda_{\ulinem}\rho_{\ulinem,\ulines^{n}}).
 \nonumber
\end{eqnarray}
where $\rho_{\ulinem,\ulines^{n}} \define \otimes _{t=1}^{n}\rho_{\ulinex_{t}\ulines_{t}}$ and $(x_{j1},\cdots,x_{jn}) =  e_{j}(m_{j},s_{j}^{n})$. Average cost incurred by sender $j$ in transmitting $m_{j}$ is $\tau_{j}(e_{j}|m_{j}) \define \sum_{s_{j}^{n}}\ttp_{S_{j}}^{n}(s_{j}^{n})\kappa_{j}(e_{j}(m_{j},s_{j}^{n}),s_{j}^{n})$ and the average cost incurred by sender $j$ is $\tau_{j}(e_{j}) \define \frac{1}{|\CalM_{j}|}\sum_{m_{j}}\tau_{j}(e_{j}|m_{j})$.
\end{definition}
The object of interest is the capacity region of a QMSTx defined below. In this article, we focus on characterizing inner bounds to the capacity region of a QMSTx.
\begin{definition}A rate-cost quadruple $(\ulineR,\ulinetau) \in [0,\infty)^{4}$ is \textit{achievable} if there exists a sequence of QMSTx codes $(n,\ulineCalM^{(n)},\ulinee^{(n)},\lambda^{(n)})$ for which $\displaystyle\lim_{n \rightarrow \infty}\overline{\xi}(\ulinee^{(n)},\lambda^{(n)}) = 0$,
\begin{eqnarray}
 \label{Eqn:3CQICAchievability}
 \lim_{n \rightarrow \infty} n^{-1}\log \mathcal{M}_{j}^{(n)} = R_{j}, \mbox{ and }\lim_{n \rightarrow \infty} \tau_{j}(e_{j}^{(n)}) \leq \tau_{j} .
 \nonumber
\end{eqnarray}
The capacity region $\mathscr{C}$ of the QMSTx is the set of all achievable rate-cost vectors and $\mathscr{C}(\ulinetau) \define \{ \ulineR:(\ulineR,\ulinetau) \in \mathscr{C}\}$.
\end{definition}

\section{Role of Algebraic Structure/Closure}
\label{Sec:IdeaSection}
In this section, we explain \textit{how} and \textit{why} structured codes can facilitate enhanced communication over a QMSTx. We begin by reviewing the best known unstructured coding scheme.
\subsection{Joint Decoding of Unstructured Codes}
\label{SubSec:UnstructuredCodingScheme}
A QMSTx is a `MAC extension' of a single sender CQ channel with random states \cite{201604JPA_BocCaiNot}. A coding scheme for the QMSTx can therefore be obtained by combining the Gelfand-Pinsker encoding scheme \cite{1980MMPCIT_GelPin} with a simultaneous decoder of a MAC \cite[Thm.~2]{201206TIT_FawHaySavSenWil}. Specifically, each sender $j$ builds a $U_{j}-$code (Tab.~\ref{Fig:CodingForCQMACDSTx}) on an auxillary set $\CalU_{j}$. The $U_{j}-$code comprises of $2^{n(R_{j}+B_{j})}$ codewords partitioned into $2^{nR_{j}}$ bins. The message $m_{j} \in [2^{nR_{j}}]$ indexes a bin and the encoder looks for a codeword within this bin that is jointly typical with the state sequence $s_{j}^{n}$. The chosen codeword, denoted as $u_{j}^{n}(m_{j},s_{j}^{n})$, and the state sequence $s_{j}^{n}$ are mapped to an input sequence in $\CalX_{j}^{n}$.

The decoder POVM performs simultaneous decoding on the $U_{1},U_{2}-$codebooks. Specifically, one can adopt the decoding POVM proposed in proof of \cite[Thm.~2]{201206TIT_FawHaySavSenWil}. Leveraging the `projector trick' therein and the `overcounting trick' \cite{201512TIT_SavWil} in the context of Marton decoding, we can analyze the error probability and derive an inner bound. The latter is the largest known inner bound achievable via any unstructured coding scheme and we provide a characterization of the same below.
\begin{theorem}
 \label{Thm:UnstructuredCodingTheorem}A rate-cost quadruple $(\ulineR,\ulinetau)$ is achievable if there exists finite sets $\CalU_{1},\CalU_{2}$, conditional distributions $p_{X_{j},U_{j}|S_{j}}$ on $\CalX_{j} \times \CalU_{j}$ for $j \in [2]$ such that $p_{\ulineS\ulineU \ulineX}(\ulines,\ulineu,\ulinex) = \ttp_{\ulineS}(\ulines)\prod_{j=1}^{2}p_{X_{j}U_{j}|S_{j}}(x_{j},u_{j}|s_{j})$ with respect to which
 \begin{eqnarray}
 \label{Eqn:UnstructuredCodeRateRegion1}
 R_{j} \!<\! I(U_{j};Y,U_{\msout{j}})_{\sigma}\!-\!I(U_{j};S_{j})_{\sigma}, \Expectation\{\!\kappa_{j}\!(\!X_{j},S_{j}\!)\} \!\leq\! \tau_{j},\nonumber\\
 R_{1}+R_{2} < I(\ulineU;Y)_{\sigma}+I(U_{1},S_{1};U_{2},S_{2})_{\sigma},\
  \nonumber
 \end{eqnarray}
for $j \in [2]$, where all informations are computed wrt the state
\begin{eqnarray}
 \label{Eqn:QuantStateInIIDCodeRateRegion}
 \sigma^{Y\ulineX\ulineS\ulineU} \define\! \sum_{\ulines,\ulinex,\ulineu}\!\!p_{\ulineS\ulineU\ulineX} (\ulines,\ulineu,\ulinex)\rho_{\ulinex\ulines} \otimes \ketbra{\ulinex~\ulineu~\ulines}.
 \end{eqnarray}
\end{theorem}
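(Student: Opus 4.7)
The plan is to compose a Gelfand--Pinsker binning scheme \cite{1980MMPCIT_GelPin} at each sender with a simultaneous CQ-MAC decoder built from the conditionally typical projectors of $\sigma$, in the spirit of \cite{201206TIT_FawHaySavSenWil,201512TIT_SavWil}, and to control the average error via a non-commutative union bound.

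First I would fix $p_{X_j U_j | S_j}$ for $j \in [2]$ and pick auxiliary binning rates $B_j > I(U_j; S_j)_\sigma + \epsilon$. Generate independently, for each $j$, a random codebook $\{u_j^n(m_j, l_j) : m_j \in \CalM_j,\ l_j \in [2^{nB_j}]\}$ whose entries are IID from the marginal $p_{U_j}$. Given $(m_j, s_j^n)$, encoder $j$ searches for $l_j^*(m_j, s_j^n)$ such that $(u_j^n(m_j, l_j^*), s_j^n) \in T_\delta^{(n)}(p_{U_j S_j})$; the covering lemma guarantees that such $l_j^*$ exists with probability tending to one. The channel symbol $x_j^n$ is then sampled from $p_{X_j | U_j S_j}^n(\cdot \mid u_j^n(m_j, l_j^*), s_j^n)$, which ensures $\EE\{\kappa_j(X_j, S_j)\} \le \tau_j$ in expectation (and by expurgation with high probability).

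Second, I would construct a simultaneous-decoding POVM. Using the typical and conditionally typical projectors $\Pi,\Pi_{u_1},\Pi_{u_2},\Pi_{\ulineu}$ extracted from the classical-quantum state $\sigma^{Y\ulineU}$ defined in \eqref{Eqn:QuantStateInIIDCodeRateRegion}, set
\[
\Lambda_{\hat m_1,\hat m_2} \define \sum_{\hat l_1,\hat l_2} \Pi\,\Pi_{u_1(\hat m_1,\hat l_1)}\,\Pi_{\ulineu(\hat m_1,\hat l_1;\hat m_2,\hat l_2)}\,\Pi_{u_2(\hat m_2,\hat l_2)}\,\Pi,
\]
and convert $\{\Lambda_{\hat m_1,\hat m_2}\}$ into a genuine POVM $\{\lambda_{\hat m_1,\hat m_2}\}$ via the Hayashi--Nagaoka/pretty-good-measurement inequality. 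Third, I would apply Sen's non-commutative union bound \cite{201109arXiv_Sen} to split the expected decoding error into four events: (a) encoder covering failure, handled by the choice $B_j > I(U_j; S_j)_\sigma$; (b) the received state atypical for $\sigma$, handled by gentle measurement; (c) one message decoded wrong, the other right; and (d) both messages wrong. For (c), the standard overcounting/projector-trick estimate bounds the contribution by a term of order $2^{n(R_j + B_j - I(U_j; Y, U_{\msout{j}})_\sigma + \delta)}$, which, after absorbing $B_j \approx I(U_j; S_j)_\sigma$, yields the single-user rate inequalities. For (d), averaging the trace expression over the two independent IID codebooks and invoking the joint typical projector estimate yields a contribution of order $2^{n(R_1+B_1+R_2+B_2 - I(\ulineU;Y)_\sigma - I(U_1,S_1;U_2,S_2)_\sigma + \delta)}$, where the $I(U_1,S_1;U_2,S_2)_\sigma$ term arises because the decoding projectors are constructed from the true joint $p_{\ulineU}$ induced on the selected codewords through binning against jointly correlated states, whereas spurious pairs are sampled from the product $p_{U_1}\otimes p_{U_2}$; substituting $B_j \approx I(U_j; S_j)_\sigma$ then delivers the sum-rate bound.

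The main technical obstacle is event (d): correctly reconciling the independent IID product law of the decoy codewords with the non-product joint law that the conditionally typical projectors are built from, and doing so while preserving the bookkeeping needed by the pretty-good-measurement step. I would address it by a careful two-stage typicality argument that couples the Gelfand--Pinsker covering analysis at each encoder with a conditional-typical-projector estimate tailored to the joint law $p_{U_1,S_1,U_2,S_2}$, following the conditioning/overcounting strategy of \cite{201512TIT_SavWil}. Averaging over the random codebook shows that the expected error vanishes under the stated rate conditions, whence a deterministic code of the required rate-cost quadruple exists, and a final expurgation step enforces the per-message cost constraints and the $\overline{\xi} \to 0$ requirement.
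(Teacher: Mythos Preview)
Your proposal is correct and matches the paper's own approach: the paper itself only sketches the proof of this theorem, pointing to Gelfand--Pinsker binning at each encoder combined with the simultaneous CQ-MAC decoder of \cite[Thm.~2]{201206TIT_FawHaySavSenWil}, together with the ``projector trick'' and the ``overcounting trick'' of \cite{201512TIT_SavWil}, which is exactly the architecture you describe. Your layered projector construction and the use of Sen's non-commutative union bound are minor stylistic variants of the same machinery, and your identification of the four error events and their controlling rate conditions mirrors the paper's outline.
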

\begin{table}\centering
\begin{minipage}{.3\textwidth}
 \centering
\begin{tabular}{|c|c|c|}
\hline\hline
$w$&$t$&$\gamma(w,t)$\\
\hline
0&0&$\ketbra{0}$\\
\hline
0&1&$\ketbra{v_{\theta}^{\perp}}$\\
\hline
1&0&$\ketbra{1}$\\
\hline
1&1&$\ketbra{v_{\theta}}$\\
\hline\hline
\end{tabular}\vspace{0.05in}\caption{$\rho_{x_{1}x_{2}s_{1}s_{2}}\!\!=\!\gamma(x_{1}\!\oplus\! x_{2},s_{1}\!\oplus\! s_{2})$}
\label{Table:Ex1Density1Operator}
\end{minipage}\begin{minipage}{2.95in}\centering
\includegraphics[width=2.45in]{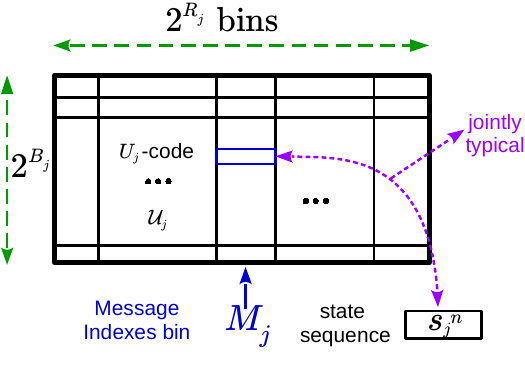}
\caption{Encoding rule for sender $j$.}
\label{Fig:CodingForCQMACDSTx}
\end{minipage}
\end{table}

\subsection{Binary Double Dirty MAC}
\label{SubSec:BDDQMSTx}
Our discussion for the following example portrays the deficiency of unstructured codes and the role of structure.
\begin{example}
 \label{Ex:BDDQMSTx}
 Let $\CalX_{1}=\CalX_{2}=\CalS_{1}=\CalS_{2}=\{0,1\}$, $\ttp_{\ulineS}(\ulines)=\frac{1}{4}$ for every $\ulines \in \ulineCalS$, $\ket{v_{\theta}} = [\cos\theta ~ \sin\theta]^{T}$ and $\ket{v_{\theta}^{\perp}} = [\sin\theta~-\!\cos\theta]^{T}$. For $(\ulinex,\ulines) \in \{0,1\}^{4}$, let $\rho_{x_{1}x_{2}s_{1}s_{2}} = \gamma(x_{1}\!\oplus \!x_{2},s_{1}\!\oplus\!s_{2})$, where $\gamma(\cdot,\cdot)$ is provided in Table \ref{Table:Ex1Density1Operator}, $\oplus$ denotes addition in the binary field $\CalF_{2}$ and the cost function $\kappa_{j}(x_{j},s_{j})=\mathds{1}_{\{x_{j}=1\}}$ is the Hamming weight function. For a $\tau \in (0,\frac{1}{2})$, what is $\mathscr{C}(\tau,\tau)$?
\end{example}
Our discussion below for the $\theta=0$ case leads us pedagogically to the non-commuting case $\theta \in (0,\frac{\pi}{2})$ which follows. The classical channel corresponding to $\theta=0$ was first studied by Philosof and Zamir \cite{200906TIT_PhiZam} and the following discussion describes their findings.

\med\textit{Case $\theta=\!0\!$} : Since the collection $\left(\rho_{\ulinex\ulines}: (\ulinex,\ulines) \in \{0,1\}^{4}\right)$ is commuting, we identify this as a classical MAC with distributed states whose output $Y\!\in \!\{0,1\}$, inputs $X_{1},X_{2}\!\in\! \{0,1\}$ and states $S_{1},S_{2}\!\in\! \{0,1\}$ are related as $Y=X_{1}\oplus S_{1}\oplus X_{2} \oplus S_{2}$. $S_{1},S_{2}$ are uniformly distributed and the average Hamming weight of the inputs is constrained to $\tau < \frac{1}{2}$. The latter constraint precludes the senders from negating the effect of the state. What rate pairs are then achievable?

We first study the best unstructured coding scheme and characterize the corresponding largest known inner bound. Towards that end, observe that the effective classical channel of Ex.~\ref{Ex:BDDQMSTx} is a `MAC extension' of a single sender channel with random parameters whose output $Y \in \{0,1\}$, Hamming cost-constrained input $X \in \{0,1\}$ and uniformly distributed state $S \in \{0,1\}$ are related as $Y=X\oplus S$. Philosof and Zamir \cite{200906TIT_PhiZam} proved that the best unstructured coding strategy for Ex.~\ref{Ex:BDDQMSTx} is obtained by replicating, at both the senders, the capacity achieving strategy for the single sender channel. Specifically, they prove the optimal choice of parameters in Thm.~\ref{Thm:UnstructuredCodingTheorem} for Ex.~\ref{Ex:BDDQMSTx} to be binary auxillary sets $\CalU_{1}=\CalU_{2}=\{0,1\}$, $p_{U_{j}|S_{j}}(1|0)=p_{U_{j}|S_{j}}(0|1)=\tau=1-p_{U_{j}|S_{j}}(0|0)=1-p_{U_{j}|S_{j}}(1|1)$ and $X_{j}=U_{j}\oplus S_{j}$ for $j\in [2]$.

We now detail the coding scheme corresponding to the above choice to shed light on its deficiency. To communicate at rate $R_{j} < h_{b}(\tau)$, sender $j$ randomly partitions the entire set of $2^{n}$ sequences into $2^{nR_{j}}$ bins. The message $m_{j}$ indexes the bin within which the sender looks for a codeword that is within an average Hamming distance of $\tau$ from the observed state sequence. Since each bin contains $2^{n(1-R_{j})} > 2^{n(1-h_{b}(\tau))}$ sequences chosen randomly, the sender finds such a codeword whp. Let $U_{j}^{n}$ denote the chosen codeword and $S_{j}^{n}$ the observed state sequence. Sender $j$ inputs $X_{j}^{n} = U_{j}^{n} \oplus S_{j}^{n}$ on the channel. The choice of the $U_{j}-$codeword guarantees that the Hamming weight constraint is met.

Observe that the channel relationship $Y=X_{1}\oplus S_{1}\oplus X_{2} \oplus S_{2}$ implies that the received vector is $Y^{n}=U_{1}^{n}\oplus U_{2}^{n}$. Recall that each message $m_{j}$ of sender $j$ is assigned a bin of $U_{j}-$codewords. The space of received sequences occupied by a \textit{single message pair} $(m_{1},m_{2})$ is therefore got by adding all possible codeword pairs in the two bins indexed by $m_{1},m_{2}$. Since the codewords in each bin is picked uniformly and independently without any joint structure, every pair yields whp a distinct sum, resulting in the range of this addition to be of size $2^{n(2-R_{1}-R_{2})} > 2^{2n(1-h_{b}(\tau))}$. Since the `fan-out' of every message pair is of size atleast $2^{2n(1-h_{b}(\tau))}$, we cannot hope to pack more than $\frac{2^{n}}{2^{2n(1-h_{b}(\tau))}}$ fan-outs in the binary output space resulting in the following fact.

\begin{fact}
 \label{Fact:UnstructRatesForBDDMAC}
 Consider Ex.~\ref{Ex:BDDQMSTx} with average Hamming cost constraint $\tau< \frac{1}{2}$. Any rate pair $(R_{1},R_{2})$ achievable by unstructured coding schemes satisfies $R_{1}+R_{2}<uce\{\max\{0,2h_{b}(\tau)-1\}\}$ where $uce\{f(\tau)\}$ denotes the upper convex envelope of the function $f(\tau)$. See \cite{200906TIT_PhiZam} for a proof.
\end{fact}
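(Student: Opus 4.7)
The plan is to make rigorous the sphere-packing / fan-out heuristic sketched in the paragraphs preceding the statement, following Philosof--Zamir \cite{200906TIT_PhiZam}. Since at $\theta=0$ the operators $\rho_{\underline{x}\underline{s}}=\gamma(x_1\oplus x_2,s_1\oplus s_2)$ are jointly diagonal in the standard basis, the effective channel is classical with $Y=X_1\oplus S_1\oplus X_2\oplus S_2$, so the claim reduces to a converse about classical Gelfand--Pinsker coding on the binary doubly-dirty MAC under Hamming cost.

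First, I would argue that within the family of unstructured random-binning schemes captured by Theorem~\ref{Thm:UnstructuredCodingTheorem}, the optimal auxiliary choice for Ex.~\ref{Ex:BDDQMSTx} is $\CalU_j=\{0,1\}$ with $U_j=X_j\oplus S_j$ and $X_j\sim\mathrm{Ber}(\tau)$ independent of $S_j$. This follows from a maximization of the single-letter region of Theorem~\ref{Thm:UnstructuredCodingTheorem} over $(p_{X_jU_j|S_j})$ subject to $\Pr(X_j=1)\leq\tau$, exploiting the symmetry and modulo additivity of the channel as carried out in \cite{200906TIT_PhiZam}. With this choice, $Y^n=U_1^n\oplus U_2^n$, and each sender's codebook is the full set $\{0,1\}^n$, partitioned uniformly at random into $2^{nR_j}$ bins of size $2^{n(1-R_j)}$.

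Second, I would formalize the fan-out count. The set of output sequences consistent with a message pair $(m_1,m_2)$ is $\{u_1\oplus u_2:u_1\in B_{m_1},\,u_2\in B_{m_2}\}$; independent random binning plus a standard hashing argument shows that these XORs are whp pairwise distinct, so each fan-out has size $\approx 2^{n(2-R_1-R_2)}$. The Hamming cost constraint forces $R_j\leq h_b(\tau)$ (else no bin contains a codeword within Hamming distance $n\tau$ of $s_j^n$ whp), so every fan-out has size at least $2^{2n(1-h_b(\tau))}$. For reliable decoding the $2^{n(R_1+R_2)}$ fan-outs must occupy essentially disjoint regions of the $2^n$-sized binary output space, yielding
\[ 2^{n(R_1+R_2)}\cdot 2^{2n(1-h_b(\tau))}\le 2^n,\text{ i.e., }R_1+R_2\le 2h_b(\tau)-1. \]
Standard time-sharing over cost allocations produces the upper convex envelope, and the clipping by $\max\{0,\cdot\}$ absorbs the regime $h_b(\tau)<1/2$ where the bound is negative and only $R_1+R_2=0$ is trivially achievable.

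The main obstacle is making rigorous the ``fan-outs must be essentially disjoint'' step: overlaps can and do occur, and the error analysis must show the overlapping regions carry vanishing measure under the output distribution. A clean route is a second-moment / Fano-type bound on the decoder's error probability that relates fan-out overlap to the list of confusable message pairs, or equivalently a lower bound on $H(M_1,M_2|Y^n)$ in terms of the fan-out count. Making the fan-out lower bound $2^{2n(1-h_b(\tau))}$ quantitatively tight enough to recover the stated sum-rate is the crux of the argument and is carried out in detail in \cite{200906TIT_PhiZam}.
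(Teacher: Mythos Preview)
The paper does not prove this fact; it simply states ``See \cite{200906TIT_PhiZam} for a proof'' and leaves it at that. Your proposal therefore provides strictly more than the paper does: you flesh out the fan-out/sphere-packing heuristic from the preceding paragraphs into a sketch and, like the paper, defer the rigorous details to Philosof--Zamir. In that sense your approach is consistent with the paper's.

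That said, a word of caution on how you have framed the argument. The Fact is a converse over \emph{all} auxiliary choices $(\CalU_j,p_{U_jX_j|S_j})$ in Theorem~\ref{Thm:UnstructuredCodingTheorem}, not just the specific choice $U_j=X_j\oplus S_j$. Your step~1 (``maximization of the single-letter region over $(p_{X_jU_j|S_j})$'') is where the real work lies, and the fan-out packing in step~2 is only an intuition for why the already-optimized single-letter bound evaluates to $2h_b(\tau)-1$. The packing heuristic by itself does not bound \emph{all} unstructured schemes: one could in principle choose non-binary $\CalU_j$ or asymmetric $p_{U_j|S_j}$, and the ``XORs are whp pairwise distinct'' claim need not hold for such codebooks. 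Philosof--Zamir's actual argument proceeds by directly upper-bounding the single-letter expression $I(\ulineU;Y)-I(U_1;S_1)-I(U_2;S_2)$ over all admissible $p_{U_1X_1|S_1}p_{U_2X_2|S_2}$ subject to the cost constraint, which is a cleaner route than trying to make the combinatorial fan-out argument rigorous for arbitrary auxiliary alphabets. Your acknowledgment that the disjointness step is the ``main obstacle'' is apt, but the resolution in \cite{200906TIT_PhiZam} is information-theoretic optimization rather than a second-moment overlap bound.
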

We now present a linear coding scheme that can achieve any rate pair $(R_{1},R_{2})$ satisfying $R_{1}+R_{2} < h_{b}(\tau)$. For simplicity, we describe achievability of the rate pair $(h_{b}(\tau),0)$. Our coding scheme is identical to the unstructured coding scheme with two key differences. The first key difference is that the bins of each sender's codebook are chosen to be \textit{cosets of a common linear code}. Let $\lambda_{2}$ denote a linear code of rate $1-h_{b}(\tau)$ whose cosets can quantize a uniform source to with an average Hamming distortion of $\tau$. In other words, a uniformly and randomly chosen coset of $\lambda_{2}$ contains a codeword within an average Hamming distance of $\tau$ of the observed state sequence whp. See \cite{197401TIT_Wyn} or \cite{BkNIT_PraPadShi} for proof of existence. Since sender $2$ has no message to transmit, it is provided with just $\lambda_{2}$ that serves as its only bin. Sender $1$ is provided with all of the $2^{nh_{b}(\tau)}$ cosets of $\lambda_{2}$, each of which serves as its bins. The encoding is identical to that for unstructured coding. 

The codebook of sender $2$ when added to any bin of user $1$'s code results in a coset of
$\lambda_{2}$, and therefore contains at most $2^{n(1-h_{b}(\tau))}$
codewords. Moreover, since $U_{2}^{n}$ lies in $\lambda_{2}$, user $1$'s codeword
$U_{1}^{n}$ and the received vector $Y^{n}=U_{1}^{n} \oplus_{2} U_{2}^{n}$ lie in the
same coset. The receiver can identify the coset from which sender $1$ chose his $U_{1}-$codeword and hence gather sender $1$'s message. Since the channel is
noiseless, sender $1$ may employ all cosets of $\lambda_{2}$ and therefore communicate at
rate $h_{b}(\tau)$ which is larger than $2h_{b}(\tau)-1$ for all $\tau \in
(0,\frac{1}{2})$.

\med\textit{Case $\theta \in (0,\frac{\pi}{2})$} : The arguments in \cite{200906TIT_PhiZam} can be used to prove that the optimal choice of parameters in Thm.~\ref{Thm:UnstructuredCodingTheorem} for this case too is $\CalU_{1}=\CalU_{2}=\{0,1\}$, $p_{U_{j}|S_{j}}(1|0)=p_{U_{j}|S_{j}}(0|1)=\tau = 1-p_{U_{j}|S_{j}}(0|0)=1-p_{U_{j}|S_{j}}(1|1)$ and $X_{j}=U_{j}\oplus S_{j}$ where $\oplus$ denotes addition mod$-2$. This implies the quantum state corresponding to which we compute our information quantities is
\begin{eqnarray}
 \sigma^{YS_{1}S_{2}X_{1}X_{2}U_{1}U_{2}} \!=\! \sum_{s_{1},s_{2}}\! \frac{\tau(1\!-\!\tau)}{4}\!\left[ \mathds{1}_{\left\{\substack{s_{1}\oplus s_{2}\\=0}\right\}}\ketbra{1}+\mathds{1}_{\left\{\substack{s_{1}\oplus s_{2}\\=1}\right\}}\ketbra{v_{\theta}} \right]\otimes\ketbra{s_{1}~s_{2}}\otimes \left[\!\!\begin{array}{c}\ketbra{0~1~s_{1}~1\oplus s_{2}}+\\\ketbra{1~0~1\oplus s_{1}~s_{2}} \end{array}\!\!\right]
 \nonumber\\
 + \sum_{s_{1},s_{2}}\left[ \mathds{1}_{\left\{\substack{s_{1}\oplus s_{2}\\=0}\right\}}\ketbra{0}+\mathds{1}_{\left\{\substack{s_{1}\oplus s_{2}\\=1}\right\}}\ketbra{v_{\theta}^{\perp}} \right]\otimes \ketbra{s_{1}~s_{2}}\otimes\left[\!\!\begin{array}{c}\frac{(1-\tau)^{2}}{4}\ketbra{0~0~s_{1}~s_{2}}+\\\frac{\tau^{2}}{4}\ketbra{1~1~1\oplus s_{1}~1\oplus s_{2}} \end{array}\!\!\right].
 \nonumber
\end{eqnarray}
The bound on the sum rate achievable using IID random codes as stated in Thm.~\ref{Thm:UnstructuredCodingTheorem} is $I(U_{1}U_{2};Y)_{\sigma}-I(U_{1};S_{1})-I(U_{2};S_{2})_{\sigma}$. In Appendix \ref{AppSec:BDDQMSTxExampleQuantumStates}, we have provided characterization of the component quantum states with respect to which the above information quantities have to be computed. Referring to the same, it can be verified that $I(U_{1}U_{2};Y)_{\sigma}-I(U_{1};S_{1})-I(U_{2};S_{2})_{\sigma} = \alpha-2+2h_{b}(\tau)$ where
\begin{eqnarray}
\label{Eqn:RateForNonZeroTheta}
 \alpha = \tilde{h}_{b}((1-2\tau)^{2}\sin\theta)-\tilde{h}_{b}(\sqrt{1-4\epsilon(1-\epsilon)\sin^{2}\theta}), ~ \tilde{h}_{b}(x)\define h_{b}\left(\frac{1}{2}+\frac{x}{2}\right)\mbox{ and }\epsilon=2\tau(1-\tau).
 \end{eqnarray}
It maybe verified that $\alpha=1$ if $\theta=0$ indicating the maximum sum rate achievable is a continuous function of $\theta$ as one expects. In Prop.~\ref{Prop:LinearCodesOutperformIID}, we verify that the linear coding scheme achieves any rate pair satisfying $R_{1}+R_{2} < uce\{\max\{0,\alpha-1+h_{b}(\tau)\}\}$ which strictly subsumes that achievable above.

Two important observations are in order. Firstly, since exponentially many pairs of codewords from $\lambda_{2}$ and the coset chosen by sender $1$ have the same sum, the receiver cannot disambiguate the pair of codewords chosen by the two senders. It can only disambiguate the sum $U_{1}^{n}\oplus U_{2}^{n}$. The second key difference in this coding scheme is that the receiver must not attempt to decode the pair, but instead decipher the message by decoding the sum of the two codewords.

\subsection{Sieving Relevant Information via Algebraic Closure}
\label{SubSec:RoleOfAlgebraicCodes}
The key difference between the structured and unstructured coding scheme is the decoding rule. While the former pins down the pair, the latter only decodes the sum, leaving uncertainity in the pair. Note that, the codeword $u_{j}^{n}(m_{j},s_{j}^{n})$ chosen by sender $j$ contains, in addition to the message, information about $s_{j}^{n}$. By requiring the receiver to pin down the pair $(u_{j}^{n}(m_{j},s_{j}^{n}):j \in [2])$ of chosen codewods, the unstructured coding scheme is forcing the receiver to gather information of the state seqeuences that is not of value to it. Is there a function of $(u_{j}^{n}(m_{j},s_{j}^{n}):j \in [2])$ that, while containing information of the pair $m_{1},m_{2}$ of messages can also suppress the amount of information of the pair $s_{1}^{n},s_{2}^{n}$ and can the coding scheme enable the Rx decode this function efficiently? The structured coding scheme is enabling the Rx do this via the mod$-2$ function.

\section{Inner Bound based on Union Coset Codes}
\label{Sec:Step1InnerBound}

\begin{theorem}
 \label{Thm:CQMACDSTxFirstStepTheorem}A rate-cost quadruple $(\ulineR,\ulinetau)$ is achievable if there exists a finite field $\CalV_{1}=\CalV_{2}=\CalW=\fieldq$ and conditional PMFs  $p_{X_{j}V_{j}|S_{j}}$ on $\CalX_{j}\times\CalV_{j}$ for $j \in [2]$ with respect to which
 \begin{eqnarray}
  R_{1}+R_{2} \!<\! \min\{H(V_{1}|S_{1})_{\sigma}:j \in [2]\}\!-\! H(V_{1}\oplus_{q}V_{2}|Y )_{\sigma}\!\!\!
 \end{eqnarray}
where all mutual informations are computed wrt the state
 \begin{eqnarray}
  \sigma^{Y\!\ulineX\ulineV\ulineS} \define \!\!\!\!\!\sum_{\ulines,\ulinev,w,\ulinex}\!\!\!\!p_{\ulineS\ulineV W \ulineX}(\ulines,\ulinev,w,\ulinex)\rho_{\ulinex\ulines} \!\otimes\! \ketbra{\ulinex~\ulinev~w~\ulines}\mbox{with}\nonumber\\
  p_{\ulineS\ulineV W \ulineX}(\ulines,\ulinev,w,\ulinex)\! = \ttp_{\ulineS}(\ulines)\!\prod_{j=1}^{2}\!p_{X_{j}V_{j}|S_{j}}(x_{j},v_{j}|s_{j})\mathds{1}_{\left\{\substack{w=\\v_{1}\oplus_{q}v_{2}}\right\}}.\nonumber
 \end{eqnarray}
 for all $(\ulines,\ulinev,w,\ulinex)\in\ulineCalS\times\ulineCalV\times\CalW\times\ulineCalX$.
 \end{theorem}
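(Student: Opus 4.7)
The plan is to build a common union coset code (UCC) over $\fieldq$ shared between the two senders, use Gelfand--Pinsker style binning at the encoder, and have the receiver decode only the symbolwise sum $W^n \define V_1^n \oplus_q V_2^n$ rather than the pair $(V_1^n,V_2^n)$. Algebraic closure under $\oplus_q$ collapses exponentially many codeword pairs into a single coset--indexed sum, realising precisely the compressive bivariate function of Sec.~\ref{SubSec:RoleOfAlgebraicCodes} that sieves away the irrelevant state information. Concretely, fix a parameter $r\in(0,1)$, let $k=\lceil nr\rceil$, and draw a common matrix $\mathbf{G} \in \fieldq^{k\times n}$ with i.i.d.\ uniform entries together with independent uniform biases $\mathbf{b}_j(m_j) \in \fieldq^n$ for each $j\in[2]$ and $m_j\in[2^{nR_j}]$. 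Sender $j$'s bin indexed by $m_j$ is the coset $\{a\mathbf{G}+\mathbf{b}_j(m_j):a\in \fieldq^k\}$. On observing $(m_j,s_j^n)$, sender $j$ searches for some $a_j\in\fieldq^k$ such that $v_j^n\define a_j\mathbf{G}+\mathbf{b}_j(m_j)$ lies in $\TDelta(V_j,S_j)|_{s_j^n}$, and generates $x_j^n$ through the memoryless kernel $\prod_t p_{X_j|V_jS_j}$.

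For the error analysis, a second--moment calculation on the number of codewords in bin $m_j$ that are jointly typical with $s_j^n$, exploiting the pairwise uniformity of $(a\mathbf{G}+\mathbf{b}, a'\mathbf{G}+\mathbf{b})$ for $a\neq a'$ (which follows from $(a-a')\mathbf{G}$ being uniform on $\fieldq^n$ whenever $a-a'\neq 0$), shows that the encoding--failure probability vanishes provided $r\log q > \log q - H(V_j|S_j)_\sigma + \epsilon$ for each $j\in[2]$. By closure,
\[
 W^n \;=\; (a_1+a_2)\mathbf{G}+\mathbf{b}_1(m_1)+\mathbf{b}_2(m_2)
\]
takes at most $q^k\cdot 2^{n(R_1+R_2)}$ distinct values over $(a_1,a_2,m_1,m_2)$, and only a fraction $\asymp 2^{-n(\log q - H(W)_\sigma)}$ of these fall in $\TDelta(W)$. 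Form the conditional typical projectors $\Pi_{Y^n|w^n}$ from the reduced state of $\sigma$, assemble the decoding POVM $\{\Lambda_{w^n}\}$ via the Hayashi--Nagaoka / square--root construction, and bound its expected error by the standard HSW estimate; averaging over $\mathbf{G}$ and the biases yields vanishing decoding error provided $r\log q + R_1+R_2 < \log q - H(W|Y)_\sigma - \epsilon$. Eliminating $r\log q$ between the encoding and decoding constraints delivers exactly $R_1+R_2 < \min_{j\in[2]}H(V_j|S_j)_\sigma - H(W|Y)_\sigma$, which is the sum--rate bound of the theorem. From the decoded $w^n$ one reads off the coset $\mathbf{b}_1(m_1)+\mathbf{b}_2(m_2)+\mathrm{rowsp}(\mathbf{G})$, which pins down $(m_1,m_2)$ whp by a pairwise--independence argument on the biases; the cost constraints follow from joint typicality of $(v_j^n,s_j^n)$ together with the $p_{X_j|V_jS_j}$ kernel, and a standard expurgation yields a deterministic $(n,\ulineCalM,\ulinee,\lambda)$ code.

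The principal obstacle is the interplay between the structured encoder and the quantum decoder. Within each coset the codewords are linearly dependent, so the usual i.i.d.\ covering lemma does not apply and the second--moment step must rely purely on pairwise uniformity of the random coset ensemble; moreover, the two senders' encoding--failure events must be handled simultaneously through a union bound with carefully coordinated joint--typicality tolerances. On the decoding side, the candidate sums $\{w^n\}$ come equipped with non--commuting conditional typical projectors, so controlling the square--root POVM error requires Sen's non--commutative union bound (or an iterated gentle--measurement argument) in place of a classical union bound. Stitching these two pieces together so that with positive probability a single realisation of $(\mathbf{G},\mathbf{b}_1,\mathbf{b}_2)$ simultaneously meets the rate, cost, and average--error criteria is the step where most care is required.
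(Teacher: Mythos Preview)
Your proposal is correct and follows essentially the same architecture as the paper: UCCs over $\fieldq$, Gelfand--Pinsker binning at each encoder via a second-moment covering argument exploiting pairwise uniformity, and single-user decoding of the sum $W=V_{1}\oplus_{q}V_{2}$ through a square-root POVM with Hayashi--Nagaoka, followed by elimination of the bin-size parameter to obtain $R_{1}+R_{2}<\min_{j}H(V_{j}|S_{j})-H(W|Y)$.

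Two minor deviations are worth noting. First, the paper allows the two senders unequal bin sizes $k_{1}\leq k_{2}$ by \emph{nesting} the generator matrices, $g_{2}=[g_{1}^{T}~g_{2/1}^{T}]^{T}$; your common $k$ is a special case that is already sufficient for this theorem (only a sum-rate bound appears), though the nesting becomes relevant in the combined scheme of Thm.~\ref{Thm:CombineIIDandUCC}. Second, your identification of Sen's non-commutative union bound as the ``principal obstacle'' overstates the difficulty: since the receiver decodes only the single auxiliary $W$, this is a genuine single-user packing problem and the ordinary Hayashi--Nagaoka inequality together with the projector trick $\pi^{Y}\rho^{\otimes n}\pi^{Y}\leq 2^{-n[H(Y)-\eta]}\pi^{Y}$ suffices, exactly as in HSW. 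The only real subtlety---which the paper handles explicitly and you gloss over---is that the competing sum codewords $W^{n}(\hata,\hum)$ are statistically correlated with the transmitted one through the shared generator, so the packing step must split into ``same bin'' versus ``different bin'' terms and upper-bound $P(\CalF_{1}\cap\CalF_{2}\cap\CalF_{3})$ via the pairwise-independence property of the random UCC rather than full independence.
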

\begin{proof}
We begin by outlining our techniques and identifying the new elements. The main novelty is in the code structure we design and the decoding POVM we propose. In Sec.~\ref{SubSec:Step1CodeStructure}, we characterize a UCC and describe our codes. The Gelfand-Pinsker encoding (Sec.~\ref{SubSec:Step1Encoding}) is employed by both senders. We decode only the sum codeword and hence employ a single user decoding POVM (Sec.~\ref{SubSec:Step1DecodingPOVM}). Since we decode into a UCC obtained by adding two statistically correlated UCCs, our analysis is not a standard one and detailed in Sec.~\ref{SubSec:Step1ProbErrorAnalysis}.

\subsection{Code Structure}
\label{SubSec:Step1CodeStructure}
The gain in rates for Ex.~\ref{Ex:BDDQMSTx} crucially relied on the bins of both codes being coset shifts of a common linear code, thereby ensuring that the size of the sum of any pair of bins was contained. We observe that the shifts can be arbitrary and there are no structural requirement on the union of these cosets. We are thus led to a UCC.
\begin{definition}
\label{Defn:UCC}
A UCC built over $\fieldq$ is specified through a generator matrix $g \in \fieldq^{k \times n}$ and a map $\iota:\fieldq^{l} \rightarrow \fieldq^{n}$ of coset shifts. The collection $c(m)\define \{v^{n}(a,m)=ag\oplus_{q} \iota(m) \}$ forms the bin corresponding to message $m \in \fieldq^{l}$ and the union $\displaystyle \cup_{u}c(m)$ of bins forms the code. We refer to this as an $(n,k,l,g,\iota)$ UCC or an $(n,k,l,g,c)$ UCC.
\end{definition}
We employ UCCs as the codebook for both senders. The symmetry in Ex.~\ref{Ex:BDDQMSTx} permitted us to design codes of the same rate for both senders. In general, to enable codes of different rates, we propose a `nesting' of the two UCCs. Without loss of generality, assume the size of sender $1$'s bins is the smaller of the two. We equip user $j$ with UCC $(n,k_{j},l_{j},g_{j},\iota_{j})$ and enforce $g_{2} = \left[ g_{1}^{T}~g_{2/1}^{T} \right]^{T}$. This ensures that the bins of user $1$'s code are sub-cosets of the bins of user $2$'s code, thus guaranteeing the desirable property mentioned prior to Defn.~\ref{Defn:UCC}. Let $\lambda_{j} \define (v_{j}^{n}(a_{j},m_{j}) \define a_{j}g_{j}\oplus_{q} \iota_{j}(m_{j}) : (a_{j},m_{j}) \in \CalV^{k_{j}} \times \CalV^{l_{j}}) $ denote the codebook of sender $j$

\subsection{Encoding}
\label{SubSec:Step1Encoding}
Our encoding is identical to that described for unstructured codes in Sec.~\ref{SubSec:UnstructuredCodingScheme}. On observing message $m_{j} \in [q^{l_{j}}]$ and state sequence $s_{j}^{n}$, sender $j$ looks for a codeword in $c_{j}(m_{j})$ that is jointly typical with $s_{j}^{n}$. It it finds atleast one, one among these is chosen and denoted $v_{j}^{n}(m_{j},s_{j}^{n})$. If it finds none, $v_{j}^{n}(m_{j},s_{j}^{n})$ is set to a default codeword in $c_{j}(m_{j})$. The pair $(s_{j}^{n},v_{j}^{n}(m_{j},s_{j}^{n}))$ is mapped to an input sequence via a `fusion map' $f_{j}:\CalS_{j}^{n} \times \CalV_{j}^{n} \rightarrow \CalX_{j}^{n}$. For the sake of the ensuing analysis, we formalize this encoding with some notation.

Let $\alpha_{j}(m_{j},s_{j}^{n}) \define \sum_{a_{j}}\mathds{1}_{\{(v_{j}^{n}(a_{j},m_{j}),s_{j}^{n}) \in T_{\eta}(p_{S_{j}V_{j}}) \}}$ be the number of available jointly typical codewords and let
\begin{eqnarray}
 \label{Eqn:Step1ProofList}
 \CalL_{j}(m_{j},s_{j}^{n}) \define
 \begin{cases}
  \{ \substack{a_{j} : (v_{j}^{n}(a_{j},m_{j}),s_{j}^{n}) \in T_{\eta}(p_{V_{j}S_{j}}) }\}&\mbox{if }\substack{\alpha_{j}(m_{j},s_{j}^{n}) \geq 1}\\
  \{0^{k_{j}}\}&\mbox{ otherwise}
 \end{cases}
 \nonumber
\end{eqnarray}
For every pair $(m_{j},s_{j}^{n})$, $a_{j}(m_{j},s_{j}^{n})$ is an element chosen from $\CalL_{j}(m_{j},s_{j}^{n})$. We define $v_{j}^{n}(m_{j},s_{j}^{n}) \define v_{j}^{n}(a_{j}(m_{j},s_{j}^{n}),s_{j}^{n})$. A predefined `fusion map' $f_{j}:\CalS_{j}^{n} \times \CalV_{j}^{n} \rightarrow \CalX_{j}^{n}$ is used to map the pair $s_{j}^{n},v_{j}^{n}(m_{j},s_{j}^{n})$ to an input sequence in $\CalX_{j}^{n}$ henceforth denoted $x_{j}^{n}(m_{j},s_{j}^{n})$.

\subsection{Decoding POVM}
\label{SubSec:Step1DecodingPOVM}
Consider the UCC $(n,k_{2},l_{1}+l_{2},g_{2},\iotapl)$ where $\iotapl(\ulinem)  = \iota_{1}(m_{1})\oplus_{q}\iota(m_{2})$ and let $w^{n}(a,\ulinem)  \define ag_{2}\oplus_{q}\iota_{1}(m_{1}) \oplus_{q}\iota_{2}(m_{2})$ denote its codewords. Let $\pi_{a,\ulinem}$ be the conditional typical projector of $\otimes_{t=1}^{n}\rho_{w_{t}(a,\ulinem)}$ wrt the state $\rho_{w} = \sum_{\ulinex,\ulines}p_{\ulineX\ulineS|W}(\ulinex,\ulines|w)\rho_{\ulinex\ulines} : w \in \CalW$ where $p_{\ulineS\ulineX W}$ is as defined in the Thm. statement. We define $\gamma_{a,\ulinem} \define \pi^{Y}\pi_{a,\ulinem}\pi^{Y}$ where $\pi^{Y}$ is the unconditional typical projector of the state $\sum_{\ulinex,\ulines}p_{\ulineX\ulineS}(\ulinex,\ulines)\rho_{\ulinex\ulines}$. The decoding POVM is
\begin{eqnarray}
 \label{Eqn:DecodingPOVM}
 \lambda_{\ulinem} \define\left( \sum_{\hata,\hatm_{1},\hatm_{2}}\!\!\!\!\!\!\gamma_{\hata,\hatm_{1},\hatm_{2}}\right)^{-\frac{1}{2}}\!\!\!\!\!\!\sum_{a}\gamma_{a,\ulinem}\left(\sum_{\hata,\hatm_{1},\hatm_{2}}\!\!\!\!\!\!\gamma_{\hata,\hatm_{1},\hatm_{2}} \right)^{-\frac{1}{2}}
\end{eqnarray}
and $\lambda_{-1} \define I^{\otimes n} - \sum_{\ulinem}\lambda_{\ulinem}$.

\subsection{Probability of Error Analysis}
\label{SubSec:Step1ProbErrorAnalysis}
We begin our analysis by stating the distribution of the random code. Specifying $g_{1},g_{2/1},\iota_{j}(m_{j}),a_{j}(m_{j},s_{j}^{n}), x_{j}^{n}(m_{j},s_{j}^{n}) : (m_{j},s_{j}^{n}) \in [q^{l_{j}}]\times \CalS_{j}^{n}$ completely specifies the code. A distribution for the random code is therefore specified through a distribution of these objects. We let upper case letters denote the corresponding random objects. $(G_{2},G_{2/1},\iota_{j}(m_{j}):m_{j}\in [q^{l_{j}}]: j \in [2]$ are mutually independent and uniformly distributed on their respective range spaces. $A_{j}(m_{j},s_{j}^{n})$ is conditionally independent of the earlier mentioned objects given $\alpha_{j}(m_{j},S_{j}^{n})$ and uniformly distributed in $\CalL(m_{j},s_{j}^{n})$. Finally, given all of the earlier mentioned objects, $X_{j}^{n}(m_{j},s_{j}^{n}) \sim p_{X|VS}^{n}(\cdot|V_{j}(m_{j},S_{j}^{n}),S_{j}^{n})$.

The average probability of error is
\begin{eqnarray}
 \label{Eqn:ProofStep1ErrAnalysis-1}
 \lefteqn{\xi(\ulinee,\lambda) =\! \sum_{\ulinem}\!\frac{\zeta(\ulinem)}{|\ulineCalM|}\!\mbox{ where } \zeta(\ulinem\!) \define \sum_{\ulines^{n}}\ttp_{\ulineS}^{n}(\ulines^{n})\zeta(\ulinem|\ulines^{n})}\!\!\!\!
 \\
 &&\!\!\!\!\!\!\!\!\!\!\!\!\!\zeta(\ulinem|s^{n}\!)\! \define\! \tr\{\!(\boldsymbol{I}\! -\! \lambda_{\ulinem}\!)\rho_{\ulinem,\ulines^{n}} \!\}, \rho_{\ulinem,\ulines^{n}} \!\!\displaystyle\define \bigotimes_{t=1}^{n}\!\rho_{x_{1}\!(m_{1},s_{1}^{n}\!)_{t}x_{2}\!(m_{2},s_{2}^{n}\!)_{t}\ulines_{t}}
 \nonumber
\end{eqnarray}
where $\boldsymbol{I}=I^{\otimes n}$, $\CalM_{j} = [q^{l_{j}}]$ and hence $|\ulineCalM| = q^{l_{1}+l_{2}}$. Henceforth, we focus on a generic term $\zeta(\ulinem)$. Let $\apl \define a_{1}(m_{1},s_{1}^{n})~0^{k_{2}-k_{1}} \oplus a_{2}(m_{2},s_{2}^{n})$ and $\CalE_{j} \define \{ \alpha_{j}(m_{j},s_{j}^{n}) \geq 1\}$, $\CalE \define \CalE_{1}\cap\CalE_{2}$. With these, we have,
\begin{eqnarray}
\label{Eqn:ProofStep1ErrAnalysis-11}
\lefteqn{\zeta(\ulinem|\ulines^{n}) \leq \ttt_{0}+\ttt_{1}+\ttt_{2}\mbox{ where }\ttt_{0} \define \mathds{1}_{\CalE_{1}^{c}}+\mathds{1}_{\CalE_{2}^{c}}}\\
 &&\!\!\!\!\!\!\!\!\!\!\!\ttt_{1}=\tr\{(\bI-\gamma_{\apl,\ulinem})\rho_{\ulinem,\ulines^{n}}\}\mathds{1}_{\CalE},~\! \ttt_{2} \!\define\!\!\!\! \sum_{\substack{\hatapl \neq \apl\\\hum\neq\ulinem}}\!\!\!\!\tr(\gamma_{\hatapl,\hum}\rho_{\ulinem,\ulines^{n}})\mathds{1}_{\CalE}
 \nonumber
\end{eqnarray}
where, we recall $\gamma_{a,\ulinem} = \pi^{Y}\pi_{a,\ulinem}\pi^{Y}$ and $\pi_{a,\ulinem}$ is the conditional typical projector of $\otimes_{t=1}^{n}\rho_{w_{t}(a,\ulinem)}$. The rest of our proof derives upper bounds on $\ttT_{i}\define \sum_{\ulines^{n}}\ttp_{\ulineS}^{n}(\ulines^{n})\Expectation\{\ttt_{i}\} $ for $i\in [3]$.
\med\textit{Upper bound on $\ttT_{0}$} : $\CalE_{1},\CalE_{2}$ are events involving only classical probabilities and we refer to \cite[Lemma 7 in Appendix B]{201710TIT_PadPra} for a proof of the following.
\begin{prop}
 \label{Prop:Step1ProofEncodingErrorEvent}
 For any $\eta > 0$,  $\exists N_{\eta} \in \naturals$ such that $\forall n \geq N(\eta)$, $\Expectation\{ T_{0}\} \leq \exp\{ -n\eta\}$ if $\frac{k_{j}\log q}{n} > \log q - H(V_{j}|S_{j})$ for $j \in [2]$.
\end{prop}
To comprehend the above bound, note that codewords of a random UCC are uniformly distributed. The expected number of codewords jointly typical with a typical state sequence $s_{j}^{n}$ is $|T_{\eta}(V_{j}|s_{j}^{n})|q^{k-n}$ whose exponent is $ k\log q-n\log q +n H(V_{j}|S_{j})$. Prop.~\ref{Prop:Step1ProofEncodingErrorEvent} guarantees the latter exponent is positive.
\med\textit{Upper bound on $\ttT_{1}$} : Since $\ttt_{1}$ involves the event $\CalE = \CalE_{1}\cap\CalE_{2}$, an upper bound on $\ttT_{1}$ can be derived using pinching and gentle operator lemma. Since this is fairly straightforward we refer the reader to a subsequent version of this manuscript for details.

\med\textit{Upper bound on $\ttT_{2}$} : In our study, we have assumed $k_{2} \geq k_{1}$, i.e., the size of bins in sender $2$'s code to be larger of the two. Under this assumption, we get only one bound on $\frac{k_{1}+2k_{2}+l_{1}+l_{2}}{n}\log q$, but in general, we get two bounds that are stated below.
\begin{prop}
 \label{Prop:Step1ProoT1}
 For any $\eta > 0$, there exists $N_{\eta} \in \naturals$ such that for all $n \geq N(\eta)$, $\ttT_{2} \leq \exp\{ -n\eta\}$ if 
 \begin{eqnarray}
 \max_{j=1,2}\left\{\!\frac{k_{j}}{n}\!\right\}\!+\!\sum_{i=1}^{2}\!\frac{k_{i}\!+\!l_{i}}{n}\! < \!3 -\frac{H(W|Y)_{\sigma}\!-\!\sum_{i=1}^{2}\!H(V_{i}|S_{i})_{\sigma}}{\log q}
 \nonumber
 \end{eqnarray}
\end{prop}
\begin{proof}
Proof is provided in Appendix \ref{AppSec:BoundOnT2}.
\end{proof}
By eliminating $\frac{k_{1}\log q}{n},\frac{k_{2}\log q}{n}$ from the bounds in Prop.~\ref{Prop:Step1ProofEncodingErrorEvent} and \ref{Prop:Step1ProoT1} and equating $R_{j}= \frac{l_{j}\log q}{n}$, we obtain the condition stated in the theorem statement.\end{proof}
\begin{prop}
\label{Prop:LinearCodesOutperformIID}
Consider Ex.\ref{Ex:BDDQMSTx} for $\tau \in (0,\frac{1}{2})$ and $\theta =\frac{\pi}{8}$. The inner bound achievable via UCCs is strictly larger than that achievable via unstructured codes.
\end{prop}
\begin{proof}
By choosing $\CalV_{1}=\CalV_{2}=\CalF_{2}$ the binary field and $p_{X_{j}V_{j}|S_{j}}(1,1\oplus s_{j}|s_{j})=\tau=1-p_{X_{j}V_{j}|S_{j}}(0,s_{j}|s_{j})$ for $s_{j}\in \{0,1\}$ and $j\in [2]$ and evaluating the inner bound in Thm.~\ref{Thm:CQMACDSTxFirstStepTheorem}, it can be verified that any rate pair $(R_{1},R_{2})$ satisfying $R_{1}+R_{2} < uce\{\max\{0,\alpha-1+h_{b}(\tau)\}\}$ is achievable where $\alpha$ is as defined in \eqref{Eqn:RateForNonZeroTheta}. See Fig.~\ref{Fig:PlotsForExample} for plots of the rate regions $R_{1}+R_{2} < uce\{\max\{0,\alpha-2+2h_{b}(\tau)\}\}$ and $R_{1}+R_{2} < uce\{\max\{0,\alpha-1+h_{b}(\tau)\}\}$ achievable via IID and structured codes respectively to verify the latter is strictly larger.
\end{proof}

\begin{figure}
    \centering
    \includegraphics[width=6in]{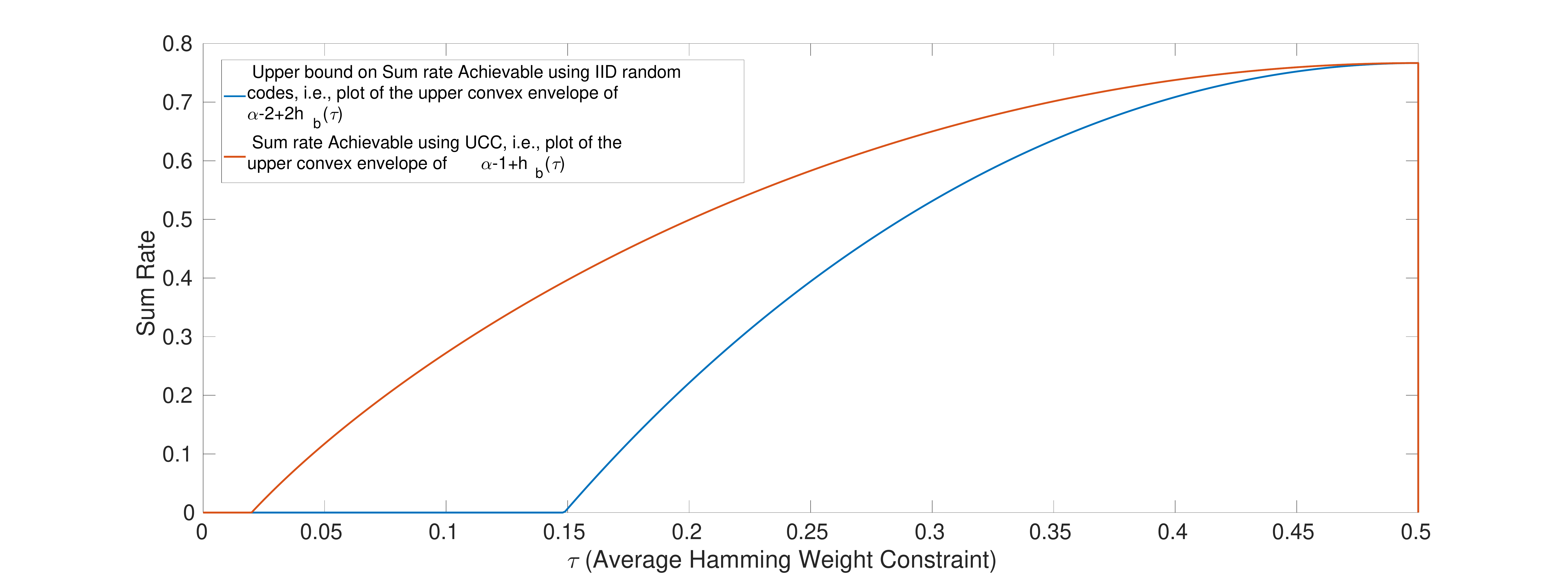}
        \caption{Bound $uce\{\max\{0,\alpha-2+2h_{b}(\tau)\}\}$ on the sum rate achievable via IID random codes is plotted in blue and the sum rate $uce\{\max\{0,\alpha-1+h_{b}(\tau)\}\}$ achievable via UCC is plotted in red.}
    \label{Fig:PlotsForExample}
        \vspace{-0.25in}
\end{figure}

\section{Enhancing IID Coding Schemes via UCCs}
\label{Sec:IIDAndCosetCodes}
The UCC based coding scheme can enable efficient decoding of $V_{1}\oplus V_{2}$. On a QMSTx wherin the latter function contains the information of the pair of messages, the UCC coding scheme can outperform the use of unstructured codes. In general, the information corresponding to the message pair can be embedded in both univariate and bivariate functions of auxillary RVs. A general coding scheme for QMSTx must therefore incorporate both unstructured codes and UCCs. We present the following inner bound that subsumes the inner bounds stated in Thms.~\ref{Thm:UnstructuredCodingTheorem}, \ref{Thm:CQMACDSTxFirstStepTheorem}.

\begin{theorem}
 \label{Thm:CombineIIDandUCC}
 A rate-cost $(\ulineR,\ulinetau)$ quadruple is achievable if there exists finite sets $\CalU_{1},\CalU_{2}$, a finite field $\CalV_{1}=\CalV_{2}=\CalW=\fieldq$ of size $q$ and conditional PMFs $p_{U_{j}V_{j}X_{j}|S_{j}} : j \in [2]$ wrt which
\begin{eqnarray}
\label{Eqn:CombineIIDandUCC1}
\begin{array}{l}R_{j} \leq I(U_{j};U_{\msout{j}}Y)_{\sigma}-I(U_{j};S_{j})_{\sigma}-H(W|\ulineU Y)_{\sigma}\\ ~~~~~~~+\min \left\{H(V_{1}|U_{1}S_{1})_{\sigma},H(V_{2}|U_{2}S_{2})_{\sigma}\right\}, \mbox{ for } j=1,2,\\
R_{1}+R_{2} \leq
I(\ulineU;Y)_{\sigma}-I(\ulineU;\ulineS)_{\sigma}-H(W|\ulineU Y)_{\sigma}\\~~~~~~~~~~~~+
\min \left\{
H(V_{1}|U_{1}S_{1})_{\sigma},H(V_{2}|U_{2}S_{2})_{\sigma}\right\},
\end{array}\nonumber
\end{eqnarray}
where the above entropies are evaluated wrt the state
\begin{eqnarray}
\label{Eqn:CombineIIDandUCC2}
\sigma^{Y\!\ulineX\ulineU\ulineV W\ulineS} \define \displaystyle\!\!\!\!\!\!\sum_{\ulines,\ulineu,\ulinev,w,\ulinex}\!\!\!\!\!\!p_{\ulineS\ulineU\ulineV W \ulineX}(\ulines,\ulineu,\ulinev,w,\ulinex)\rho_{\ulinex\ulines} \!\otimes\! \ketbra{\ulinex~\ulineu~\ulinev~w~\ulines},\nonumber\\
p_{\ulineS\ulineU\ulineV W \ulineX}\left(\substack{\ulines,\ulineu,\ulinev, w,\ulinex}\right)\! = \ttp_{\ulineS}(\ulines)\!\displaystyle\prod_{j=1}^{2}\!p_{X_{j}V_{j}U_{j}|S_{j}}(x_{j},v_{j},u_{j}|s_{j})\mathds{1}_{\left\{\substack{w=\\v_{1}\oplus_{q}v_{2}}\right\}}.\nonumber
 \end{eqnarray}
 for all $(\ulines,\ulinev,w,\ulinex)\in\ulineCalS\times\ulineCalV\times\CalW\times\ulineCalX$.
\end{theorem}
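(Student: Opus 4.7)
The plan is to overlay the UCC construction of Thm.~\ref{Thm:CQMACDSTxFirstStepTheorem} on top of the Gelfand--Pinsker + simultaneous-MAC scheme underlying Thm.~\ref{Thm:UnstructuredCodingTheorem}, yielding a two-layer code at each sender. Formally, I split sender $j$'s message as $m_{j}=(m_{j}^{I},m_{j}^{V})$, with $R_{j}^{I}$ encoded via an IID random $\CalU_{j}$-codebook of size $2^{n(R_{j}^{I}+B_{j})}$ partitioned uniformly into $2^{nR_{j}^{I}}$ Gelfand--Pinsker bins (drawn according to the $p_{U_{j}}$ marginal of the test state), and $R_{j}^{V}=\frac{l_{j}}{n}\log q$ encoded via an $(n,k_{j},l_{j},g_{j},\iota_{j})$ UCC $\lambda_{j}$ over $\fieldq$, nested as $g_{2}=[g_{1}^{T}~g_{2/1}^{T}]^{T}$ exactly as in Sec.~\ref{SubSec:Step1CodeStructure}. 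Encoding is two-stage Gelfand--Pinsker binning: sender $j$ first finds $u_{j}^{n}$ in bin $m_{j}^{I}$ jointly $\eta$-typical with $s_{j}^{n}$ (successful whp provided $B_{j}>I(U_{j};S_{j})_{\sigma}$), then finds $v_{j}^{n}(a_{j},m_{j}^{V})$ in UCC bin $c_{j}(m_{j}^{V})$ jointly $\eta$-typical with $(u_{j}^{n},s_{j}^{n})$ (successful whp, by the random-UCC covering argument of Prop.~\ref{Prop:Step1ProofEncodingErrorEvent} applied conditionally on $(u_{j}^{n},s_{j}^{n})$, provided $\frac{k_{j}}{n}\log q>\log q-H(V_{j}|U_{j}S_{j})_{\sigma}$); a fusion map drawn from $p_{X_{j}|U_{j}V_{j}S_{j}}^{\otimes n}$ then produces $x_{j}^{n}$.

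The decoding POVM successively decodes the pair $\ulineu^{n}$ and then the sum $w^{n}=v_{1}^{n}\oplus_{q}v_{2}^{n}$. For the first stage I use the simultaneous-MAC POVM of \cite[Thm.~2]{201206TIT_FawHaySavSenWil} applied to the $U$-codebooks with projectors built from the averaged states $\rho_{\ulineu}\define\sum_{\ulinex,\ulines}p_{\ulineX\ulineS|\ulineU}(\ulinex,\ulines|\ulineu)\rho_{\ulinex\ulines}$ and symmetrized by $\pi^{Y}$ as in the unstructured proof; call this intermediate POVM $\{\mu_{\ulinem^{I}}\}$. For the second stage I use a conditional UCC sum-decoder projector $\gamma_{a,\ulinem^{V}}^{\ulineu}\define\pi^{Y|\ulineu}\pi_{a,\ulinem^{V}}^{\ulineu}\pi^{Y|\ulineu}$ patterned on Eq.~\eqref{Eqn:DecodingPOVM} but built from the conditional state $\rho_{w}^{\ulineu}=\sum_{\ulinex,\ulines}p_{\ulineX\ulineS|W\ulineU}(\ulinex,\ulines|w,\ulineu)\rho_{\ulinex\ulines}$. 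The final POVM $\lambda_{\ulinem^{I},\ulinem^{V}}$ is the Hayashi--Nagaoka square-root normalization of $\sum_{a}\mu_{\ulinem^{I}}^{1/2}\gamma_{a,\ulinem^{V}}^{\ulineu}\mu_{\ulinem^{I}}^{1/2}$ summed appropriately over the remaining random code indices.

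The error expectation decomposes into (i) two covering events, handled by Prop.~\ref{Prop:Step1ProofEncodingErrorEvent} and its IID analogue; (ii) a $\ulineU$-decoding event that, after the projector trick and the overcounting argument of \cite{201512TIT_SavWil}, yields $R_{j}^{I}<I(U_{j};U_{\msout{j}}Y)_{\sigma}-I(U_{j};S_{j})_{\sigma}$ and $R_{1}^{I}+R_{2}^{I}<I(\ulineU;Y)_{\sigma}-I(\ulineU;\ulineS)_{\sigma}$ after eliminating $B_{j}$; and (iii) a conditional $W$-decoding event that, by reprising the computation behind Prop.~\ref{Prop:Step1ProoT1} with $\rho_{w}$ replaced by $\rho_{w}^{\ulineu}$ and typical projectors replaced by their $\ulineu$-conditional counterparts, yields $R_{1}^{V}+R_{2}^{V}<\min\{H(V_{1}|U_{1}S_{1})_{\sigma},H(V_{2}|U_{2}S_{2})_{\sigma}\}-H(W|\ulineU Y)_{\sigma}$ after eliminating $\frac{k_{j}}{n}\log q$ against the UCC covering constraint. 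Setting $R_{j}=R_{j}^{I}+R_{j}^{V}$, observing that $R_{1}^{V}+R_{2}^{V}$ may be split freely between the two senders, and Fourier--Motzkin-eliminating $(B_{j},k_{j}/n,R_{j}^{I},R_{j}^{V})$ produces the region claimed. The main obstacle will be decoupling the two decoding stages in the error expectation: because $\gamma_{a,\ulinem^{V}}^{\ulineu}$ depends on the \emph{actual} $\ulineu^{n}$ produced by the random encoders rather than a fixed argument, I must verify that the conditional UCC computation survives averaging over the unstructured codebook without introducing cross terms, which I expect to handle by two nested invocations of the gentle-operator lemma (one for the $\ulineU$-pinching and one for the $W$-pinching) combined with the expectation-splitting argument developed in \cite{BkNIT_PraPadShi} for the classical analogue, so that the two layers factor at leading order in the error exponent.
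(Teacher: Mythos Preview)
Your proposal is correct and aligns with the paper's outlined approach: the paper also layers an IID $\ulineU$-code over the UCC $\ulineV$-code, uses the same two-stage Gelfand--Pinsker encoding (referring to \cite{201710TIT_PadPra} for the classical code structure), and decodes successively---first a joint/simultaneous POVM into $(U_{1},U_{2})$, then the sum $V_{1}\oplus_{q}V_{2}$ on the collapsed state---arriving at the same pre-Fourier--Motzkin bounds as \cite[Proof of Thm.~5]{201710TIT_PadPra}. The one cosmetic difference is that where you invoke nested gentle-operator pinchings to decouple the two stages, the paper points instead to an ``alternate form of the overcounting trick'' from \cite{202202arXiv_Pad3CQBC}; both devices serve the same purpose of controlling the cross terms you flagged, and the paper in fact defers the full details to a subsequent version.
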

By choosing $\CalV_{1}=\CalV_{2}=\phi$, we can recover the inner bound achievable via IID codes in \ref{Thm:UnstructuredCodingTheorem}. By choosing $\CalU_{1}=\CalU_{2}=\phi$, we can recover the inner bound in Thm.~\ref{Thm:CQMACDSTxFirstStepTheorem}, thus proving that above inner bound subsumes all known inner bounds for a general QMSTx. We now outline the main elements of our proof and furnish details in a subsequent version of this manuscript. The code structure and the encoding is identical to the classical MAC with distributed states and is provided in \cite{201710TIT_PadPra}. Decoding is based on a combination of joint and successive decoding. A joint decoding POVM is employed to decode into $U_{1},U_{2}$. Following this, decoding of $V_{1}\oplus V_{2}$ is performed on the collapsed state. We leverage an alternate form of the `overcounting trick' that we have used in \cite{202202arXiv_Pad3CQBC} to obtain the same pre-Fourier Motzkin bounds as those in \cite[Proof of Thm.~5]{201710TIT_PadPra}.

\section{Communicating over Classical-Quantum Channel with Random States using UCCs}
\label{Sec:CQPTPSTxWithUCCs}
We begin with a formal description of a point-to-point classical quantum channel with classical random states available non-causally at the transmitter. We henceforth refer to this channel as a QSTx.

Consider a (generic) \textit{QSTx} specified through (i) a finite input set $\CalX$, (ii) a finite set $\CalS$ of states, (iii) a PMF $\ttp_{S}(\cdot)$ on $\CalS$, (iii) a collection $(\rho_{xs} \in \mathcal{D}(\mathcal{H}_{Y}): (x,s) \in \CalX\times \CalS )$ of density operators and (iv) cost function $\kappa :\CalX \times \CalS \rightarrow [0,\infty)$. The cost function is additive, i.e., having observed the state sequence $s^{n}$ the cost incurred by the sender in preparing the state $\otimes_{t=1}^{n}\rho_{x_{t}s_{t}}$ is $\olinekappa(x^{n},s^{n}) \define \frac{1}{n}\sum_{t=1}^{n}\kappa(x_{t},s_{t})$. Reliable communication on a QSTx entails identifying a code. 
\begin{definition}
\label{Defn:CQPTPSTxCode}
An $(n,\CalM,e,\lambda)$ QSTx code consists of a message index set $\CalM$, an encoder map $e:\CalM \times \CalS^{n} \rightarrow \CalX^{n}$ with codewords denoted $(x^{n}(m,s^{n})= (x^{n}(m,s^{n})_{t} : 1\leq t \leq n) :(m,s^{n}) \in \CalM\times \CalS^{n} )$ and a decoder POVM $\lambda \define \{ \lambda_{m} \in \CalP(\CalH^{\otimes n}_{Y}) : m \in \CalM\}$. The average error probability of the code is
\begin{eqnarray}
 \label{Eqn:CQPTPSTxAvgErrorProb}
 \overline{\xi}(e,\lambda) \define 1-\frac{1}{|\CalM|}\sum_{m \in \CalM}\sum_{s^{n} \in \CalS^{n}}\ttp^{n}_{S}(s^{n})\tr(\lambda_{m}\rho_{x^{n}(m,s^{n}),s^{n}})\mbox{ where } \rho_{x^{n}(m,s^{n}),s^{n}} = \bigotimes_{t=1}^{n}\rho_{x(m,s^{n})_{t},s_{t}}.
 \nonumber
\end{eqnarray}
Average cost incurred by the sender in transmitting message $m$ is $\tau(e|m) \define \sum_{s^{n}}\ttp_{S}^{n}(s^{n})\kappa(e(m,s^{n}),s^{n})$ and the average cost incurred by the sender is $\tau(e) \define \frac{1}{|\CalM|}\sum_{m}\tau(e|m)$.
\end{definition}
The object of interest is the capacity region of a QSTx defined below. In this section, we prove achievability of the current known largest single-letter inner bounds to the capacity region of a QSTx.

\begin{definition}
\label{Defn:QSTxAchievabilityAndCapacity}
A rate-cost quadruple $(R,\tau) \in [0,\infty)^{2}$ is \textit{achievable} if there exists a sequence of QSTx codes $(n,\CalM^{(n)},e^{(n)},\lambda^{(n)})$ for which $\displaystyle\lim_{n \rightarrow \infty}\overline{\xi}(e^{(n)},\lambda^{(n)}) = 0$,
\begin{eqnarray}
 \label{Eqn:CQPTPSTxCodeAchievability}
 \lim_{n \rightarrow \infty} n^{-1}\log \mathcal{M}^{(n)} = R, \mbox{ and }\lim_{n \rightarrow \infty} \tau(e^{(n)}) \leq \tau.
 \nonumber
\end{eqnarray}
The capacity region $\mathscr{C}$ of the QSTx is the set of all achievable rate-cost vectors and $\mathscr{C}(\tau) \define \{ R:(R,\tau) \in \mathscr{C}\}$.
\end{definition}
\begin{theorem}
 \label{Thm:QSTxSingleLetterInnerBound}
 Consider a QSTx characterized through a finite set $\CalS$ of states, a PMF $\ttp_{S}$ on $\CalS$ modeling the distribution of the random state, an input set $\CalX$ and a collection of density operators $(\rho_{xs} \in \mathcal{D}(\mathcal{H}_{Y}): (x,s) \in \CalX\times \CalS )$. For $\tau > 0$, $R \in \mathscr{C}(\tau)$ if there exists a PMF $\ttp_{S}p_{VX|S}$ on $\CalS\times\CalV\times\CalX$ for which $\sum_{x,s}\ttp_{S}(s)p_{X|S}(x|s)\kappa(x,s) \leq \tau$ and  $R < I(V;Y)-I(V;S)$ where all information quantities are computed with respect to the quantum state
 \begin{eqnarray}
  \label{Eqn:QSTxThmStatementQuantumState}
  \sigma_{YSXV} \define \sum_{x,s,v}\ttp_{S}(s)p_{VX|S}(v,x|s)\rho_{xs}\otimes\ketbra{s~x~v}.
 \end{eqnarray}
\end{theorem}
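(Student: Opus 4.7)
The plan is to specialise the two-sender UCC analysis of Thm.~\ref{Thm:CQMACDSTxFirstStepTheorem} to a single-sender setting with Gelfand-Pinsker style binning. Without loss of generality I assume $\CalV=\fieldq$ for a prime power $q\geq |\CalV|$ (a standard embedding with type-matching handles the general case). First I would construct a random $(n,k,l,g,\iota)$ UCC whose generator matrix $G\in\fieldq^{k\times n}$ and coset shifts $\iota(m)\in\fieldq^{n}$, $m\in[q^{l}]$, are mutually independent and uniform on their respective ranges. This yields a codebook partitioned into $q^{l}$ bins $c(m)=\{v^{n}(a,m)\define aG\oplus_{q}\iota(m): a\in\fieldq^{k}\}$, one per message.

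The encoder is Gelfand-Pinsker: given $(m,s^{n})$, I form the list $\CalL(m,s^{n})\define\{a:(v^{n}(a,m),s^{n})\in T_{\eta}(p_{VS})\}$, draw $A(m,s^{n})$ uniformly from $\CalL$ (or pick a default element if $\CalL=\emptyset$), and sample $X^{n}\sim p_{X|VS}^{n}(\cdot|v^{n}(A(m,s^{n}),m),s^{n})$ to produce the channel input. For the decoder I use a pretty-good measurement analogous to \eqref{Eqn:DecodingPOVM}: let $\pi_{a,m}$ be the conditional typical projector of $\otimes_{t}\rho_{v^{n}(a,m)_{t}}$ with respect to $\rho_{v}\define\sum_{x,s}p_{XS|V}(x,s|v)\rho_{xs}$, let $\pi^{Y}$ be the unconditional typical projector of $\rho^{Y}\define\sum_{v}p_{V}(v)\rho_{v}$, and set $\gamma_{a,m}\define\pi^{Y}\pi_{a,m}\pi^{Y}$. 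The POVM element for message $m$ is
\begin{eqnarray*}
 \lambda_{m}\define \Big(\sum_{\hat a,\hat m}\gamma_{\hat a,\hat m}\Big)^{-\frac{1}{2}}\Big(\sum_{a}\gamma_{a,m}\Big)\Big(\sum_{\hat a,\hat m}\gamma_{\hat a,\hat m}\Big)^{-\frac{1}{2}},
\end{eqnarray*}
where summing $\gamma_{a,m}$ across $a\in\fieldq^{k}$ ensures that only the bin index $m$ is decoded.

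The error analysis mirrors Sec.~\ref{SubSec:Step1ProbErrorAnalysis}: upper bound $\zeta(m|s^{n})\leq \ttt_{0}+\ttt_{1}+\ttt_{2}$, where $\ttt_{0}\define \mathds{1}_{\{\CalL(m,s^{n})=\emptyset\}}$, $\ttt_{1}\define \tr\{(\bI-\gamma_{A,m})\rho_{m,s^{n}}\}\mathds{1}_{\{\CalL\neq\emptyset\}}$ and $\ttt_{2}\define \sum_{(\hat a,\hat m)\neq(A,m)}\tr(\gamma_{\hat a,\hat m}\rho_{m,s^{n}})\mathds{1}_{\{\CalL\neq\emptyset\}}$. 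After averaging over the random code and $s^{n}$, $\ttT_{0}\to 0$ whenever $\frac{k\log q}{n}>\log q-H(V|S)_{\sigma}$ by the covering property of uniform random UCCs \cite[Lemma~7]{201710TIT_PadPra}; $\ttT_{1}\to 0$ via pinching and the gentle operator lemma applied on the event $\{\CalL\neq\emptyset\}$, where $(V^{n}(A,m),S^{n})$ is jointly $p_{VS}$-typical; and $\ttT_{2}\to 0$ whenever $\frac{(k+l)\log q}{n}<\log q-H(V|Y)_{\sigma}$ by the Hayashi-Nagaoka operator inequality combined with a union bound over the $q^{k+l}$ codewords. Eliminating $\frac{k\log q}{n}$ between the two bounds yields $R=\frac{l\log q}{n}<H(V|S)_{\sigma}-H(V|Y)_{\sigma}=I(V;Y)_{\sigma}-I(V;S)_{\sigma}$, which is the claim. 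The cost constraint follows because $X^{n}$ is generated symbolwise from $p_{X|VS}$ on top of a $p_{VS}$-typical pair $(V^{n},S^{n})$.

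The hard part will be the packing term $\ttT_{2}$: codewords of a random UCC are only pairwise independent and uniform, not mutually independent, so the usual square-root-measurement argument must be combined with a union bound that uses only this pairwise structure. This is the exact specialisation of Appendix~\ref{AppSec:BoundOnT2} to a single sender, where the second binning layer collapses and one recovers the single rate constraint $\frac{(k+l)\log q}{n}<\log q-H(V|Y)_{\sigma}$ cleanly, closing the argument.
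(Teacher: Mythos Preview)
Your proposal is correct and follows essentially the same approach as the paper: the same UCC codebook with uniform random generator and shifts, the same Gelfand--Pinsker encoding, the same pretty-good measurement built from $\gamma_{a,m}=\pi^{Y}\pi_{v^{n}(a,m)}\pi^{Y}$, and the same $\ttt_{0}/\ttt_{1}/\ttt_{2}$ split handled respectively by the UCC covering lemma, pinching plus gentle operator, and a packing bound exploiting pairwise independence of UCC codewords.

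One technical point worth flagging: the paper's Sec.~\ref{Sec:CQPTPSTxWithUCCs} does \emph{not} condition on $\{\CalL\neq\emptyset\}$ but on the stronger event $\CalE_{L-\eta}=\{\alpha(m,s^{n})\geq 2^{n(L-\eta)}\}$, introducing an auxiliary parameter $L$ that is later set to $L\approx\tfrac{k\log q}{n}-\log q+H(V|S)$. This lets them bound $P(\Astarmsn=a\mid\cdot)\leq 2^{-n(L-\eta)}$ and obtain the clean packing constraint $\tfrac{(k+l)\log q}{n}<\log q-H(V|Y)$ directly. Your route (specialising Appendix~\ref{AppSec:BoundOnT2}) instead drops $P(\CalF_{2}\mid\CalF_{1})\leq 1$ and sums freely over the true index $a$, which actually yields $\tfrac{(2k+l)\log q}{n}<2\log q-H(V|S)-H(V|Y)$ rather than the bound you state; after eliminating $k$ against the covering constraint this collapses to the same $R<I(V;Y)-I(V;S)$, so your conclusion stands, but the intermediate constraint you wrote is the one the $L$-device produces, not the one Appendix~\ref{AppSec:BoundOnT2} gives on specialisation.
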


\begin{proof}
 The two new elements in our proof are the code structure (Sec.~\ref{SubSec:CodeStructureQSTx}). Specifically, we build a union coset code to communicate over the QSTx. Since the codewords of a random union coset code are not mutually independent and are uniformly distributed, a standard information theoretic proof is not applicable. We therefore provide detailed steps in Sec.~\ref{SubSec:QSTXProofErrorProb}, Sec.~\ref{SubSec:QSTxProofCollatingBounds}.
\subsection{Code Structure}
\label{SubSec:CodeStructureQSTx}
Let $\CalV = \fieldq$ be a finite field of size $q$. Consider a $(n,k,l,g,\iota)$ UCC whose codewords are $(v^{n}(a,m)\define ag \oplus_{q} \iota(m) : (a,m) \in \CalV_{k}\times \CalV^{l})$. The message index set $\CalM = [q^{l}]$ and the bin corresponding to message $m$ is the collection $c(m)\define (ag\oplus_{q}\iota(m):a \in \CalV_{k})$. As we describe in the sequel, the encoder observes the state sequence $s^{n}\in \CalS^{n}$ and chooses a codeword in $c(m)$ where $m \in \CalM$ is the message that needs to communicated to the Rx.

\subsection{Encoding}
\label{SubSec:EncodingQSTx}
For every possible pair $(m,s^{n})$ of message and state sequence, let
\begin{eqnarray}
 \label{Eqn:QSTxProofCodeListSize}
 \alpha(m,s^{n}) \define \sum_{a \in \CalV^{k}}\mathds{1}_{\{ (v^{n}(a,m),s^{n}) \in  T_{\eta}^{n}(p_{VS})\}}
\end{eqnarray}
be the number of codewords in the bin indexed by the mesage that is jointly typical with the observed state sequence $s^{n} \in \CalS^{n}$. Let
\begin{eqnarray}
 \label{Eqn:QStxProofListDefn}
 \CalL(m,s^{n}) \define
 \begin{cases}
  \{ \substack{a : (v^{n}(a,m),s^{n}) \in T_{\eta}(p_{VS}) }\}&\mbox{if }{\alpha(m,s^{n}) \geq 1}\\
  \{0^{k}\}&\mbox{ otherwise, i.e. }{\alpha(m,s^{n}) = 0}.
  \end{cases}
\end{eqnarray}
be a list of candidate code words that is available to the encoder for the message, state sequence pair $(m,s^{n})$. Let $a^{*}_{\mbox{\tiny{$m,\!s^{n}$}}}$ be chosen from $\CalL(m,s^{n})$ and $v^{*}(m,s^{n})\define v^{n}(\astarmsn,m)$. A predefined `fusion map' $f:\CalS^{n} \times \CalV^{n} \rightarrow \CalX^{n}$ is used to map the pair $s^{n},v^{*}(m,s^{n})$ to an input sequence in $\CalX^{n}$ henceforth denoted $x^{n}(m,s^{n})$. On observing state sequence $s^{n}$ and message $m$, the encoder chooses state sequence $x^{n}(m,s^{n}) = (x(m,s^{n})_{t}:1\leq t \leq n)$, and we define $\rho_{m,s^{n}} \define \displaystyle \bigotimes_{t=1}^{n}\rho_{x(m,s^{n})_{t}s_{t}}$. 

\subsection{Decoding POVMs}
\label{SubSec:QSTxProofdecodingPOVM}
Consider a PMF $p_{SVX}=\ttp_{S}p_{VX|S}$ on $\CalS\times \CalV \times \CalX$. Let 
\begin{eqnarray}
 \label{Eqn:QSTxproofQuantumStates}
 \rho \define \sum_{x,s}p_{SX}(s,x)\rho_{xs}, \rho_{v}\define \sum_{x,s}p_{XS|V}(x,s|v)\rho_{xs}\!\!\!\begin{array}{c}\mbox{have spectral}\\\mbox{ decompositions }\end{array}\!\!\!\rho = \sum_{y \in \CalY}q(y)\ketbra{f_{y}}\mbox{ and }\rho_{v}= \sum_{y \in \CalY}r_{Y|V}(y|v)\ketbra{e_{y|v}}
 \nonumber
 \end{eqnarray}
respectively. Let
\begin{eqnarray}
 \label{Eqn:QSTxDecProofTypicalProjectors}
 \pi^{Y} \define \!\!\sum_{y^{n} \in \CalY^{n}}\!\!\bigotimes_{t=1}^{n}\ketbra{f_{y_{t}}}\mathds{1}_{\left\{ y^{n} \in T_{\eta}^{n}(q)\right\}}\mbox{ and }\pi_{v^{n}}\define \left\{ \!\!
 \begin{array}{lr}0&\mbox{if }v^{n} \notin T_{\eta}^{n}(p_{V})\\
 \!\!\displaystyle\sum_{y^{n} \in \CalY^{n}}\!\!\bigotimes_{t=1}^{n}\ketbra{e_{y_{t}|v_{t}}}\mathds{1}_{\{ (v^{n},y^{n})\in T_{\eta}^{n}(p_{V}r_{Y|V})\}}&\mbox{otherwise.}
 \end{array}\right.
\end{eqnarray}
be the unconditional and conditional typical projectors. For $(a,m) \in \CalV^{k}\times \CalV^{l}$, let
\begin{eqnarray}
 \label{Eqn:QSTxProofDecodingPOVM}
 \gamma_{a,m} \define \pi^{Y}\pi_{v^{n}(a,m)}\pi^{Y}\mbox{ and }\lambda_{m} \!\define\! \left(\sum_{\hata,\hatm\in \CalV^{k}\!\times\!\CalV^{l}}\!\!\!\!\!\!\gamma_{\hata,\hatm}\right)^{\!\!-\frac{1}{2}}\!\!\!\sum_{a\in \CalV^{k}}\gamma_{a,m} \left(\sum_{\hata,\hatm\in \CalV^{k}\!\times\!\CalV^{l}}\!\!\!\!\!\!\gamma_{\hata,\hatm}\right)^{\!\!-\frac{1}{2}}\!\!\!\!\mbox{ for }m \in [q^{l}]\mbox{ and }\lambda_{-1}=I_{\CalH_{Y}}^{\otimes n}-\sum_{m\in \CalM}\!\!\lambda_{m}
\end{eqnarray}
and $\{\lambda_{m}:m \in \CalM=[q^{l}],\lambda_{-1}\}$ be the decoding POVM.

\subsection{Error Probability}
\label{SubSec:QSTXProofErrorProb}
As is standard in information theory, we derive an upper bound on the error probability of the best code by averaging the error probability over an ensemble of codes. We begin by specifying the distribution of a random code in our ensemble. Note that a code is completely specified via (i) the generator matrix $g \in \CalV^{k \times n}$, the map $\iota : \CalV^{l}\rightarrow \CalV^{n}$ and the collection $(\astarmsn \in \CalV^{k}:(m,s^{n}) \in \CalM\times \CalS^{n})$. The generator matrix $G$, the map $\iota$ and the collection $(\Astarmsn \in \CalV^{k}:(m,s^{n}) \in \CalM\times \CalS^{n})$ of a random code are distributed with PMF
\begin{eqnarray}
 \label{Eqn:QSTxProofRandomCodeDistribution}
 P\left(\!\!\!\begin{array}{c}G=g,\iota(\tilde{m})=d^{n}(\tilde{m}):\tilde{m}\in\CalV^{l}, \\\astarmsn = a(m,s^{n}):(m,s^{n})\in \CalM\times\CalS^{n}\end{array}\!\!\!\right)=\frac{1}{q^{kn}}\left[\prod_{\tildem \in \CalV^{l}}\frac{1}{q^{n}} \right]\cdot \left[\prod_{m\in\CalV^{l}}\prod_{s^{n}\in\CalS^{n}}\frac{1}{|\CalL(m,s^{n})|}\mathds{1}_{\{a(m,s^{n}) \in \CalL(m,s^{n})\}}\right].
\end{eqnarray}
From \eqref{Eqn:QSTxProofRandomCodeDistribution}, it can be verified that the generator matrix $G$ and the range of $(\iota(m):m \in \CalV^{l})$ are mutually independent and uniformly distributed in the respective range spaces. Moreover, for $(m,s^{n}) \in \CalM \times \CalS^{n}$ and any $ a \in \CalL(m,s^{n})$, we note that
\begin{eqnarray}
 \label{Eqn:QSTxProofListErrorImport}
 P\left(\!\!\!\begin{array}{c}\astarmsn = a(m,s^{n})\mbox{ for}\\(m,s^{n})\in \CalM\times\CalS^{n} \end{array}\!\!\left|\!\!\begin{array}{c}
 G=g,\iota(\tilde{m})=d^{n}(\tilde{m})\\\mbox{for }\tilde{m}\in\CalV^{l} \end{array}\right.\!\!\!\right)=\frac{1}{|\CalL(m,s^{n})|}\mathds{1}_{\{a(m,s^{n}) \in \CalL(m,s^{n})\}},
\end{eqnarray}
a relation we shall opportunity to use in our analysis. For a specific code, the average error probability of the code is
\begin{eqnarray}
 \label{Eqn:QSTxProofErrProb1}
 \overline{\xi}(e,\lambda) \define \frac{1}{q^{l}}\sum_{m}\sum_{s^{n}}\ttp_{S}^{n}(s^{n})\tr\left\{ \left(\bI-\lambda_{m}\right)\rho_{m,s^{n}}\right\}\leq \ttt_{0}+ \frac{1}{q^{l}}\sum_{m}\sum_{s^{n}}\ttp_{S}^{n}(s^{n})\tr\left\{ \left(\bI-\lambda_{m}\right)\rho_{m,s^{n}}\right\}\mathds{1}_{\CalE_{L-\eta}}, \mbox{ where},\nonumber\\
 \label{Eqn:QSTxProofErrProb2}
\ttt_{0} \define \frac{1}{q^{l}}\sum_{m,s^{n}}\ttp_{S}^{n}(s^{n})\mathds{1}_{\CalE_{L-\eta}^{c}}, \CalE_{A} \define \left\{ \alpha(m,s^{n}) \geq 2^{nA}\right\}.\mbox{ Suppose }S = \gamma_{\astarmsn,m}, T = \!\!\!\sum_{a \neq \astarmsn}\!\!\!\gamma_{a,m}+\sum_{\hatm \neq m}\sum_{a}\gamma_{a,\hatm},\mbox{ then } \nonumber
\nonumber
\end{eqnarray}
$\lambda_{m} \geq (S+T)^{-\frac{1}{2}}S(S+T)^{-\frac{1}{2}}$ and the operator inequalities $0 \leq S \leq \bI$, $0 \leq T$ hold. In breaking down the error event, we have considered the event $\CalE^{L-\eta}$ and the choice of $L$ will be specified in due course. From the Hayashi Nagaoka inequality \cite{200307TIT_HayNag} \cite[Lemma 16.4.1]{BkWilde_2017}, we have
\begin{eqnarray}
\label{Eqn:QSTxApplyingHayashiNagaoka1}
I^{\otimes n} - \lambda_{m} \leq I^{\otimes n} - (S+T)^{-\frac{1}{2}}S(S+T)^{-\frac{1}{2}} \leq 2(I-S)+4(I-T)\mbox{ and hence }\overline{\xi}(e,\lambda) \leq \ttt_{0}+\ttt_{1}+\ttt_{2}\mbox{ where }
\nonumber\\
\label{Eqn:QSTxApplyingHayashiNagaoka2}
\ttt_{1}\define \frac{2}{q^{l}}\sum_{m}\sum_{s^{n}}\ttp_{S}^{n}(s^{n})\tr(\left[\bI-\gamma_{\astarmsn,m}\right]\rho_{m,s^{n}})\mathds{1}_{\CalE_{L-\eta}}, \ttt_{2}=\ttt_{21}+\ttt_{22}, \ttt_{21} \define \frac{4}{q^{l}}\sum_{m}\sum_{s^{n}}\sum_{\hata \neq \astarmsn}\!\!\!\!\ttp_{S}^{n}(s^{n})\tr(\gamma_{\hata,m}\rho_{m,s^{n}})\mathds{1}_{\CalE_{L-\eta}}\nonumber\\
\mbox{and }\ttt_{22}\define \frac{4}{q^{l}}\sum_{m}\sum_{s^{n}}\sum_{\hatm \neq m}\sum_{\hata}\ttp_{S}^{n}(s^{n})\tr(\gamma_{\hata,\hatm}\rho_{m,s^{n}})\mathds{1}_{\CalE_{L-\eta}}.
\nonumber
 \end{eqnarray}
Let $\ttT_{i}:0\leq i \leq 3$ denote abov terms corresponding to a random code whose distribution is specified in \eqref{Eqn:QSTxProofRandomCodeDistribution}. In the sequel, we derive upper bounds for each of the four terms $\ttt_{0},\ttt_{1},\ttt_{21},\ttt_{22}$ corresponding to the best code by evaluating upper bounds on $\Expectation\{\ttT_{i}\}:0 \leq i \leq 3$.
\med\textit{Bound on $\mathbb{E}\{\ttT_{0}\}$} : We note that $\ttt_{0}$ concerns only the event that the encoder cannot find atleast $2^{n(L-\eta)}$ codewords in the bin indexed by the message that is jointly typical with the observed state sequence. The analysis of this event involves only classical probabilities. Using a standard second moment method similar to that in \cite[Lemma 7 in Appendix B]{201710TIT_PadPra} or \cite[Lemma 8 in Appendix A]{201804TIT_PadPra}. Employing these techniques, the following lemma can be proved.
\begin{lemma}
 \label{Lem:}
 For every $\eta >0$, there exists $N_{\eta} \in \naturals$ such that for all $n \geq N_{\eta}$, we have
 \begin{eqnarray}
  \label{Eqn:QSTxProofBoundOnCovering}
  \Expectation\{\ttT_{0}\} \leq \exp\left\{-n\left( \frac{k\log q}{n}- \left[\log q -H(V|S)+L-\eta \right]  \right) \right\}.
 \end{eqnarray}
\end{lemma}
\med\textit{Bound on $\Expectation\{\ttT_{1}\}$} : For a specific code, we have $\ttt_{1}=\ttt_{11}+\ttt_{12}$, where 
\begin{eqnarray}
 \label{Eqn:QSTxProofAnalyzeT1_1}
 \label{Eqn:QSTxProofAnalyzeT1_2}
 \ttt_{11} &\define&  \frac{2}{q^{l}}\sum_{m}\sum_{s^{n}}\ttp_{S}^{n}(s^{n})\tr(\left[\bI-\gamma_{\astarmsn,m}\right]\rho_{m,s^{n}})\mathds{1}_{\left\{\!\!\!\begin{array}{c} \alpha(m,s^{n})\geq 2^{n(L-\eta)}, (s^{n},v^{*}(m,s^{n})) \in T_{\eta}^{n}(p_{SV})\\x^{n}(m,s^{n}) \notin T_{\eta}^{n}(p_{X|SV}|s^{n},v^{*}(m,s^{n}))\end{array}\!\!\! \right\}} \nonumber\\
  \label{Eqn:QSTxProofAnalyzeT1_4}
 \ttt_{12} &\define&  \frac{2}{q^{l}}\sum_{m}\sum_{s^{n}}\ttp_{S}^{n}(s^{n})\tr(\left[\bI-\gamma_{\astarmsn,m}\right]\rho_{m,s^{n}})\mathds{1}_{\left\{\!\!\!\begin{array}{c} \alpha(m,s^{n})\geq 2^{n(L-\eta)}, (s^{n},v^{*}(m,s^{n}),x^{n}(m,s^{n})) \in T_{\eta}^{n}(p_{SVX})\end{array}\!\!\! \right\}}.\nonumber
\end{eqnarray}
A bound on $\Expectation\{\ttT_{11}\}$ can be derived using standard bounds on typical sequences. Indeed, since $X^{n}(m,s^{n})$ is conditionally picked wrt $\prod p_{X|VS}(\cdot|v^{*}(m,s^{n}),s^{n})$, the probability that it is not conditionally typically falls exponentially. In the following, we indicate how we breakup $\ttt_{12}$ and indicate how each term in corresponding breakup can be bounded. 

\med\textit{Bound on $\Expectation\{\ttT_{12}\}$} : We have
\begin{eqnarray}
 \label{Eqn:QSTxProofAnalyzeT1_5}
  \ttt_{12}&\leq&  \frac{2}{q^{l}}\sum_{\substack{ m \in \CalV^{l}\\a \in \CalV^{k} }}\sum_{\substack{(s^{n},v^{n},x^{n})\\\in T_{\eta}^{n}(p_{SVX})}}\!\!\!\!\ttp_{S}^{n}(s^{n})\tr(\left[\bI-\pi^{Y}\pi_{v^{n}}\pi^{Y}\right]\rho_{x^{n},s^{n}})\mathds{1}_{\left\{\!\!\!\begin{array}{c} \astarmsn=a,v^{n}(a,m)=v^{n},x^{n}(m,s^{n})=x^{n}\end{array}\!\!\! \right\}}\nonumber\\
 \label{Eqn:QSTxProofAnalyzeT1_6}
  &\leq&  \frac{2}{q^{l}}\sum_{\substack{ m \in \CalV^{l}\\a \in \CalV^{k} }}\sum_{\substack{(s^{n},v^{n},x^{n})\\\in T_{\eta}^{n}(p_{SVX})}}\!\!\!\!\ttp_{S}^{n}(s^{n})\tr(\left[\bI-\pi^{Y}\pi_{v^{n}}\pi^{Y}\right]\rho_{x^{n},s^{n}})\mathds{1}_{\left\{\!\!\!\begin{array}{c} v^{n}(a,m)=v^{n},x^{n}(m,s^{n})=x^{n}\end{array}\!\!\! \right\}}=\ttt_{121}-\ttt_{122}\nonumber\\
 \label{Eqn:QSTxProofAnalyzeT1_7}
  \ttt_{121}&\define&\frac{2}{q^{l}}\sum_{\substack{ m \in \CalV^{l}\\a \in \CalV^{k} }}\sum_{\substack{(s^{n},v^{n},x^{n})\\\in T_{\eta}^{n}(p_{SVX})}}\!\!\!\!\ttp_{S}^{n}(s^{n})\mathds{1}_{\left\{\!\!\!\begin{array}{c} v^{n}(a,m)=v^{n},x^{n}(m,s^{n})=x^{n}\end{array}\!\!\! \right\}}
  \nonumber\\
  \label{Eqn:QSTxProofAnalyzeT1_8}
  \ttt_{122}&\define&\frac{2}{q^{l}}\sum_{\substack{ m \in \CalV^{l}\\a \in \CalV^{k} }}\sum_{\substack{(s^{n},v^{n},x^{n})\\\in T_{\eta}^{n}(p_{SVX})}}\!\!\!\!\ttp_{S}^{n}(s^{n})\tr(\left[\pi^{Y}\pi_{v^{n}}\pi^{Y}\right]\rho_{x^{n},s^{n}})\mathds{1}_{\left\{\!\!\!\begin{array}{c} v^{n}(a,m)=v^{n},x^{n}(m,s^{n})=x^{n}\end{array}\!\!\! \right\}}.\nonumber
\end{eqnarray}
A lower bound on $\Expectation\{\ttT_{122}\}$ can be derived using gentle operator lemma and pinching. Specifically, we have
\begin{eqnarray}
 \Expectation\{\ttT_{122}\} &=& \frac{2}{q^{l}}\sum_{\substack{ m \in \CalV^{l}\\a \in \CalV^{k} }}\sum_{\substack{(s^{n},v^{n},x^{n})\\\in T_{\eta}^{n}(p_{SVX})}}\!\!\!\!\ttp_{S}^{n}(s^{n})\tr(\left[\pi^{Y}\pi_{v^{n}}\pi^{Y}\right]\rho_{x^{n},s^{n}})P\left(\!\!\!\begin{array}{c} v^{n}(a,m)=v^{n},x^{n}(m,s^{n})=x^{n}\end{array}\!\!\! \right)\nonumber\\
 &=&\frac{2}{q^{l+n}}\sum_{\substack{ m \in \CalV^{l}\\a \in \CalV^{k} }}\sum_{\substack{(s^{n},v^{n},x^{n})\\\in T_{\eta}^{n}(p_{SVX})}}\!\!\!\!\ttp_{S}^{n}(s^{n})\tr(\pi_{v^{n}}\pi^{Y}\rho_{x^{n},s^{n}}\pi^{Y})p_{X|VS}(x^{n}|v^{n},s^{n})\nonumber\\
 \label{Eqn:QSTxProofAnalyzeT1_10}
 &\geq& \frac{2}{q^{l+n}}\sum_{\substack{ m \in \CalV^{l}\\a \in \CalV^{k} }}\sum_{\substack{(s^{n},v^{n},x^{n})\\\in T_{\eta}^{n}(p_{SVX})}}\!\!\!\!\ttp_{S}^{n}(s^{n})\left[\tr(\pi_{v^{n}}\rho_{x^{n},s^{n}})-\norm{\pi^{Y}\rho_{x^{n},s^{n}}\pi^{Y} - \rho_{x^{n},s^{n}}}_{1}\right]p_{X|VS}(x^{n}|v^{n},s^{n}).
\end{eqnarray}
The first term on the RHS of \eqref{Eqn:QSTxProofAnalyzeT1_10} can be lower bounded by the pinching lemma \cite[Property 15.2.7]{BkWilde_2017} and the second term can be upper bounded via gentle operator lemma \cite[Lemma 9.4.2]{BkWilde_2017}. We now proceed to $\Expectation\{\ttT_{11}\}$

\med\textit{Bound on $\ttt_{11}$} : 
\begin{eqnarray}
 \label{Eqn:QSTxProofAnalyzeT1_3}
 \ttt_{11}\leq \frac{2}{q^{l}}\sum_{m}\sum_{s^{n}}\ttp_{S}^{n}(s^{n})\mathds{1}_{\left\{\!\!\!\begin{array}{c} (s^{n},v^{*}(m,s^{n})) \in T_{\eta}^{n}(p_{SV}),x^{n}(m,s^{n}) \notin T_{\eta}^{n}(p_{X|SV}|s^{n},v^{*}(m,s^{n}))\end{array}\!\!\! \right\}} \mbox{ and }\nonumber\\
\end{eqnarray}

In the sequel, we compute upper bound on $\Expectation\{{\ttT}_{11}\}$ and lower bound on $\Expectation\{\overline{\ttT}_{12}\}$. We begin with the latter. We have
\begin{eqnarray}
 \label{Eqn:QSTxProofAnalyzeT1_2}
  \Expectation\{\overline{\ttt}_{12}\} &=&\frac{2}{q^{l}}\sum_{m}\sum_{\substack{(s^{n},v^{n},x^{n})\\\in T_{\eta}(p_{SVX})}}\ttp_{S}^{n}(s^{n})\tr(\pi^{Y}\pi_{v^{n}}\pi^{Y}\rho_{x^{n},s^{n}})P(X^{n}(m,s^{n})=x^{n},V^{*}(m,s^{n})=v^{n})
  \nonumber\\
  &=&\frac{2}{q^{l}}\sum_{m}\sum_{\substack{(s^{n},v^{n},x^{n})\\\in T_{\eta}(p_{SVX})}}\ttp_{S}^{n}(s^{n})\tr(\pi^{Y}\pi_{v^{n}}\pi^{Y}\rho_{x^{n},s^{n}})p_{X|VS}(x^{n}|v^{n},s^{n})P(V^{*}(m,s^{n})=v^{n})\nonumber\\
  &\geq& \frac{2}{q^{l}}\sum_{m}\sum_{\substack{(s^{n},v^{n},x^{n})\\\in T_{\eta}(p_{SVX})}}\ttp_{S}^{n}(s^{n})\tr(\pi^{Y}\pi_{v^{n}}\pi^{Y}\rho_{x^{n},s^{n}})p_{X|VS}(x^{n}|v^{n},s^{n})P(V^{*}(m,s^{n})=v^{n})
\end{eqnarray}
\med\textit{Bound on $\Expectation\{\ttT_{21}\}$} : Before we study $\Expectation\{\ttT_{21}\}$, we begin by noting that
\begin{eqnarray}
 \label{Eqn:QSTxProofAnalysisT21_1}
 \ttt_{21} = \frac{4}{q^{l}}\sum_{m}\sum_{s^{n}}\sum_{\hata \neq \astarmsn}\!\!\!\!\ttp_{S}^{n}(s^{n})\tr(\gamma_{\hata,m}\rho_{m,s^{n}})\mathds{1}_{\CalE_{L-\eta}} = \frac{4}{q^{l}}\sum_{m}\sum_{s^{n}}\sum_{\hata \neq \astarmsn}\!\!\!\!\ttp_{S}^{n}(s^{n})\tr(\pi^{Y}\pi_{v^{n}(\hata,m)}\pi^{Y}\rho_{m,s^{n}})\mathds{1}_{\CalE_{L-\eta}}\\
 \label{Eqn:QSTxProofAnalysisT21_2}
 = \frac{4}{q^{l}}\sum_{m}\sum_{s^{n}}\sum_{\substack{a,\hata \in\CalV^{k}\\a \neq \hata}}\sum_{v^{n},\hatv^{n},x^{n}}\!\!\!\ttp_{S}^{n}(s^{n})\tr(\pi^{Y}\pi_{\hatv^{n}}\pi^{Y}\rho_{x^{n},s^{n}})\mathds{1}_{\left\{ \!\!\!
 \begin{array}{c} 
 \alpha(m,s^{n}) \geq 2^{n(L-\eta)}, \astarmsn=a, v^{n}(a,m)=v^{n}\\v^{n}(\hata,m)=\hatv^{n},x^{n}(m,s^{n})=x^{n}
 \end{array} \!\!\!\right\}}
 \\
 \label{Eqn:QSTxProofAnalysisT21_3}
 = \frac{4}{q^{l}}\sum_{m}\sum_{s^{n}}\sum_{\substack{a,\hata \in\CalV^{k}\\a \neq \hata}}\sum_{\substack{v^{n}\\\hatv^{n},x^{n}}}\!\!\!\ttp_{S}^{n}(s^{n})\tr(\pi^{Y}\pi_{\hatv^{n}}\pi^{Y}\rho_{x^{n},s^{n}})\mathds{1}_{\left\{ \!\!\!
 \begin{array}{c} 
 \alpha(m,s^{n}) \geq 2^{n(L-\eta)}, \astarmsn=a, v^{n}(a,m)=v^{n},\hatv^{n} \in T_{\eta}(p_{V})\\v^{n}(\hata,m)=\hatv^{n},x^{n}(m,s^{n})=x^{n},(v^{n},s^{n}) \in T_{\eta}^{n}(p_{VS})
 \end{array} \!\!\!\right\}}
\end{eqnarray}
where (i) $\rho_{m,s^{n}} = \displaystyle \bigotimes_{t=1}^{n}\rho_{x(m,s^{n})_{t}s_{t}}$ is as defined earlier, (ii) \eqref{Eqn:QSTxProofAnalysisT21_2} follows by summing over all possible choices for the corresponding codewords, (iii) \eqref{Eqn:QSTxProofAnalysisT21_3} holds for $L\geq \eta$ since the encoding rule guarantees $\astarmsn \in \CalL(m,s^{n})$ and the latter set defined in \eqref{Eqn:QStxProofListDefn} contains indices corresponding to codewords that are jointly typical with the observed state sequence whenever $\alpha(m,s^{n})\geq 1$ and (iv) \eqref{Eqn:QSTxProofAnalysisT21_3} is true since, as defined in \eqref{Eqn:QSTxProofDecodingPOVM}, $\pi_{\hatv^{n}}$ is the zero projector if $\hatv^{n} \notin T_{\eta}^{n}(p_{V})$. This implies
\begin{eqnarray}
  \label{Eqn:QSTxProofAnalysisT21_4}
\Expectation\{\ttT_{21}\} = \frac{4}{q^{l}}\sum_{m}\!\!\!\!\sum_{\substack{(v^{n},s^{n})\\\in T_{\eta}^{n}(p_{VS})}}\!\sum_{\substack{a,\hata \in\CalV^{k}\\a \neq \hata}}\sum_{\substack{\hatv^{n} \in T_{\eta}^{n}(p_{V})\\x^{n}\in \CalX^{n}}}\!\!\!\!\!\!\ttp_{S}^{n}(s^{n})\tr(\pi^{Y}\pi_{\hatv^{n}}\pi^{Y}\rho_{x^{n},s^{n}})P\left( \!\!\!\!
 \begin{array}{c} 
 \alpha(m,s^{n}) \geq 2^{n(L-\eta)}, \astarmsn=a, v^{n}(a,m)=v^{n}\\v^{n}(\hata,m)=\hatv^{n},x^{n}(m,s^{n})=x^{n},
 \end{array} \!\!\!\right)\!.
  \end{eqnarray}
For $(v^{n},s^{n}) \in T_{\eta}^{n}(p_{VS})$ and $\hata \neq a$, we have
\begin{eqnarray}
 \label{Eqn:QSTxProofAnalysisT21_5}
P\left( \!\!\!
 \begin{array}{c} 
 \alpha(m,s^{n}) \geq 2^{n(L-\eta)}, \astarmsn=a, V^{n}(a,m)=v^{n}\\V^{n}(\hata,m)=\hatv^{n},X^{n}(m,s^{n})=x^{n},
 \end{array} \!\!\!\right) =
 P\left( \!\!\!
 \begin{array}{c} 
 V^{n}(a,m)=v^{n}\\V^{n}(\hata,m)=\hatv^{n}
 \end{array} \!\!\!\right)P\left( \!\!\!
 \begin{array}{c} 
 \alpha(m,s^{n}) \geq\\ 2^{n(L-\eta)} \end{array}\left|\begin{array}{c}V^{n}(a,m)=v^{n}\\V^{n}(\hata,m)=\hatv^{n}
 \end{array}\right. \!\!\!\right)
 \\
 \label{Eqn:QSTxProofAnalysisT21_6}
 \times P\left(\!\!\!\! \begin{array}{c} \astarmsn=a\end{array}\!\!\! \left| \!\!\!\begin{array}{c} \alpha(m,s^{n}) \geq 2^{n(L-\eta)}\\V^{n}(a,m)=v^{n},V^{n}(\hata,m)=\hatv^{n} \end{array} \right.\!\!\!\!\right)P\left( \!\!\!\!\begin{array}{c}X^{n}(m,s^{n})=x^{n} \end{array} \!\!\left| \!\!\begin{array}{c} \astarmsn=a, \alpha(m,s^{n}) \geq 2^{n(L-\eta)}\\V^{n}(a,m)=v^{n},V^{n}(\hata,m)=\hatv^{n} \end{array} \right. \!\!\!\!\right)
 \nonumber\\
 \label{Eqn:QSTxProofAnalysisT21_7}
 \leq \frac{1}{q^{2n}}\cdot 1 \cdot \frac{1}{\alpha(m,s^{n})}p_{X|SV}(x^{n}|s^{n},v^{n})\mathds{1}_{\{\alpha(m,s^{n}) \geq 2^{n(L-\eta)} \}} \leq \frac{1}{q^{2n}} \cdot \frac{p_{X|SV}(x^{n}|s^{n},v^{n})}{2^{n(L-\eta)}} \leq  \frac{2^{n(H(V|S)+3\eta)}p_{XV|S}(x^{n},v^{n}|s^{n})}{q^{2n}2^{n(L-\eta)}},
 \end{eqnarray}
where the first inequality in \eqref{Eqn:QSTxProofAnalysisT21_7} follows the fact that codewords of a UCC are pairwise independent and uniformly distributed \cite{BkNIT_PraPadShi} and the second inequality in \eqref{Eqn:QSTxProofAnalysisT21_7} follows from standard bounds on conditional probability of jointly typical sequences. Substituting the bound in the RHS of \eqref{Eqn:QSTxProofAnalysisT21_7} in \eqref{Eqn:QSTxProofAnalysisT21_4}, we have
\begin{eqnarray}
\label{Eqn:QSTxProofAnalysisT21_8}
\Expectation\{\ttT_{21}\} \!\!\!&\leq&\!\!\! \frac{4\cdot2^{n(H(V|S)+3\eta)} }{q^{l+2n}\cdot 2^{n(L-\eta)}}\sum_{m}\!\!\!\!\sum_{\substack{(v^{n},s^{n})\\\in T_{\eta}^{n}(p_{VS})}}\!\sum_{\substack{a,\hata \in\CalV^{k}\\a \neq \hata}}\sum_{\substack{\hatv^{n} \in T_{\eta}^{n}(p_{V})\\x^{n}\in \CalX^{n}}}\tr(\pi^{Y}\pi_{\hatv^{n}}\pi^{Y} p_{XVS}(x^{n},v^{n},s^{n}) \rho_{x^{n},s^{n}})
 \nonumber\\
 \label{Eqn:QSTxProofAnalysisT21_9}
 \lefteqn{\!\!\!\!\!\!\leq \frac{4\cdot2^{n(H(V|S)+3\eta)} }{q^{l+2n}\cdot 2^{n(L-\eta)}}\sum_{m}\sum_{\substack{a,\hata \in\CalV^{k}\\a \neq \hata}}\sum_{\substack{\hatv^{n} \in\\ T_{\eta}^{n}(p_{V})}}\!\!\!\!\tr(\pi^{Y}\pi_{\hatv^{n}}\pi^{Y} \rho^{\otimes n}) = \frac{4\cdot2^{n(H(V|S)+3\eta)} }{q^{l+2n}\cdot 2^{n(L-\eta)}}\sum_{m}\sum_{\substack{a,\hata \in\CalV^{k}\\a \neq \hata}}\sum_{\substack{\hatv^{n} \in\\ T_{\eta}^{n}(p_{V})}}\!\!\!\!\tr(\pi_{\hatv^{n}}\pi^{Y} \rho^{\otimes n}\pi^{Y}) }
 \\
 \label{Eqn:QSTxProofAnalysisT21_10}
 \!\!\!\!\!\!&\leq&\!\! \frac{4\cdot2^{n(H(V|S)-H(Y)+6\eta)} }{q^{l+2n}\cdot 2^{n(L-\eta)}}\sum_{m}\sum_{\substack{a,\hata \in\CalV^{k}\\a \neq \hata}}\sum_{\substack{\hatv^{n} \in\\ T_{\eta}^{n}(p_{V})}}\!\!\!\tr(\pi_{\hatv^{n}}\pi^{Y})\leq \frac{4\cdot2^{n(H(V|S)-H(Y)+6\eta)} }{q^{l+2n}\cdot 2^{n(L-\eta)}}\sum_{m}\sum_{\substack{a,\hata \in\CalV^{k}\\a \neq \hata}}\sum_{\substack{\hatv^{n} \in \\T_{\eta}^{n}(p_{V})}}\!\!\!\tr(\pi_{\hatv^{n}})\\
 \label{Eqn:QSTxProofAnalysisT21_11}
 \!\!\!\!\!\!&\leq&\!\! \frac{4\cdot2^{n(H(V|S)-H(Y)+H(V,Y)+12\eta)} }{q^{l+2n}\cdot 2^{n(L-\eta)}}\sum_{m}\sum_{\substack{a,\hata \in\CalV^{k}\\a \neq \hata}}\!\!\!1 \leq \frac{4\cdot2^{n(H(V|S)-H(Y)+H(V,Y)+12\eta)} }{q^{2n-2k}\cdot 2^{n(L-\eta)}}\\
 \label{Eqn:QSTxProofAnalysisT21_12}
 \!\!\!\!\!\!&\leq&\!\! \exp\left\{ -n\left( L -\left[\frac{k\log q}{n}-\log q+H(V|S)\right]+\log q -H(V|Y)-\frac{k \log}{n}  \right)\right\}
 \end{eqnarray}
where \eqref{Eqn:QSTxProofAnalysisT21_9} follows from the operator inequality
\begin{eqnarray}
 \label{Eqn:QSTxProofAnalysisT21_13}
 \sum_{\substack{(v^{n},s^{n})\\\in T_{\eta}^{n}(p_{VS})}}\sum_{{x^{n}\in \CalX^{n}}} p_{XVS}(x^{n},v^{n},s^{n}) \rho_{x^{n},s^{n}} \leq \sum_{\substack{x^{n},s^{n},v^{n}\\ \in \CalX^{n}\times \CalS^{n}\times \CalV^{n}}}p_{XVS}(x^{n},v^{n},s^{n}) \rho_{x^{n},s^{n}} = \rho^{\otimes n}
 \nonumber
\end{eqnarray}
which follows from the positivity of the density operators, \eqref{Eqn:QSTxProofAnalysisT21_10} follows from the operator inequalities $\pi^{Y}\rho^{\otimes n}\pi^{Y} \leq 2^{-n[H(Y)-3\eta]}\pi^{Y}$ \cite[Property 15.1.3]{BkWilde_2017} and $\pi^{Y} \leq \bI$, \eqref{Eqn:QSTxProofAnalysisT21_11} follows from $\tr(\pi_{\hatv^{n}}) \leq 2^{n[H(Y|V)+3\eta]}$ for $\hatv^{n} \in T_{\eta}^{n}(p_{V})$\cite[Property 15.1.2]{BkWilde_2017} and $|T_{\eta}^{n}(p_{V})|\leq 2^{n[H(V)+3\eta]}$ and the last bound \eqref{Eqn:QSTxProofAnalysisT21_12} follows by collating all exponents.

\med\textit{Bound on $\Expectation\{\ttT_{22}\}$} : Our analysis of $\Expectation\{\ttT_{22}\}$ is very similar to $\Expectation\{\ttT_{21}\}$. The only difference between these analyses stems from the fact that the legitimate codeword $V^{n}(\astarmsn,m)$ and an incorrect codeword in the same bin $V^{n}(\hata,m)$ are not statistically independent, however the legitimate codeword $V^{n}(\astarmsn,m)$ and any codeword in a different bin $V^{n}(\hata,\hatm)$ for $\hatm \neq m$ are statistically independent. This suggests that we can derive the bounds without having to condition on the realization of $\astarmsn$ as in \eqref{Eqn:QSTxProofAnalysisT21_5} - \eqref{Eqn:QSTxProofAnalysisT21_7}. Except for this minor difference, the rest of the analysis provided below is identical. We have
\begin{eqnarray}
 \label{Eqn:QSTxProofAnalysisT22_1}
 \ttt_{22} = \frac{4}{q^{l}}\sum_{\substack{m,\hatm\\m\neq \hatm}}\sum_{s^{n}}\sum_{\hata}\ttp_{S}^{n}(s^{n})\tr(\gamma_{\hata,\hatm}\rho_{m,s^{n}})\mathds{1}_{\CalE_{L-\eta}} = \frac{4}{q^{l}}\sum_{\substack{m,\hatm\\m\neq \hatm}}\sum_{s^{n}}\sum_{\hata}\ttp_{S}^{n}(s^{n})\tr(\pi^{Y}\pi_{v^{n}(\hata,\hatm)}\pi^{Y}\rho_{m,s^{n}})\mathds{1}_{\CalE_{L-\eta}}\\
 \label{Eqn:QSTxProofAnalysisT22_2}
 = \frac{4}{q^{l}}\sum_{\substack{m,\hatm\\m\neq \hatm}}\sum_{s^{n}}\sum_{\hata}\sum_{v^{n},\hatv^{n},x^{n}}\!\!\!\ttp_{S}^{n}(s^{n})\tr(\pi^{Y}\pi_{\hatv^{n}}\pi^{Y}\rho_{x^{n},s^{n}})\mathds{1}_{\left\{ \!\!\!
 \begin{array}{c} 
 \alpha(m,s^{n}) \geq 2^{n(L-\eta)}, V^{*}(m,s^{n})=v^{n}\\v^{n}(\hata,\hatm)=\hatv^{n},x^{n}(m,s^{n})=x^{n}
 \end{array} \!\!\!\right\}}
 \\
 \label{Eqn:QSTxProofAnalysisT22_3}
 = \frac{4}{q^{l}}\sum_{\substack{m,\hatm\\m\neq \hatm}}\sum_{s^{n}}\sum_{\hata}\sum_{\substack{v^{n}\\\hatv^{n},x^{n}}}\!\!\!\ttp_{S}^{n}(s^{n})\tr(\pi^{Y}\pi_{\hatv^{n}}\pi^{Y}\rho_{x^{n},s^{n}})\mathds{1}_{\left\{ \!\!\!
 \begin{array}{c} 
 \alpha(m,s^{n}) \geq 2^{n(L-\eta)}, V^{*}(m,s^{n})=v^{n},\hatv^{n} \in T_{\eta}(p_{V})\\v^{n}(\hata,\hatm)=\hatv^{n},x^{n}(m,s^{n})=x^{n},(v^{n},s^{n}) \in T_{\eta}^{n}(p_{VS})
 \end{array} \!\!\!\right\}}
\end{eqnarray}
where as earlier, (i) $\rho_{m,s^{n}} = \displaystyle \bigotimes_{t=1}^{n}\rho_{x(m,s^{n})_{t}s_{t}}$, (ii) \eqref{Eqn:QSTxProofAnalysisT22_2} follows by summing over all possible choices for the corresponding codewords, (iii) \eqref{Eqn:QSTxProofAnalysisT22_3} holds for $L\geq \eta$ since the encoding rule guarantees $\astarmsn \in \CalL(m,s^{n})$ and the latter set defined in \eqref{Eqn:QStxProofListDefn} contains indices corresponding to codewords that are jointly typical with the observed state sequence whenever $\alpha(m,s^{n})\geq 1$ and (iv) \eqref{Eqn:QSTxProofAnalysisT22_3} is true since, as defined in \eqref{Eqn:QSTxProofDecodingPOVM}, $\pi_{\hatv^{n}}$ is the zero projector if $\hatv^{n} \notin T_{\eta}^{n}(p_{V})$. This implies
\begin{eqnarray}
  \label{Eqn:QSTxProofAnalysisT22_4}
\Expectation\{\ttT_{22}\} = \frac{4}{q^{l}}\sum_{\substack{m,\hatm\\m\neq \hatm}}\sum_{\substack{(v^{n},s^{n})\\\in T_{\eta}^{n}(p_{VS})}}\!\sum_{\hata }\sum_{\substack{\hatv^{n} \in T_{\eta}^{n}(p_{V})\\x^{n}\in \CalX^{n}}}\!\!\!\!\!\!\ttp_{S}^{n}(s^{n})\tr(\pi^{Y}\pi_{\hatv^{n}}\pi^{Y}\rho_{x^{n},s^{n}})P\left( \!\!\!\!
 \begin{array}{c} 
 \alpha(m,s^{n}) \geq 2^{n(L-\eta)}, V^{*}(m,s^{n})=v^{n}\\V^{n}(\hata,m)=\hatv^{n},x^{n}(m,s^{n})=x^{n} \end{array} \!\!\!\right)\!.
  \end{eqnarray}
For $(v^{n},s^{n}) \in T_{\eta}^{n}(p_{VS})$ and $\hatm \neq m$, we have
\begin{eqnarray}
 \label{Eqn:QSTxProofAnalysisT22_5}
P\left( \!\!\!\!
 \begin{array}{c} 
 \alpha(m,s^{n}) \geq 2^{n(L-\eta)}, V^{*}(m,s^{n})=v^{n}\\V^{n}(\hata,\hatm)=\hatv^{n},X^{n}(m,s^{n})=x^{n} \end{array} \!\!\!\right)&=&
\sum_{a \in \CalV^{k}}P\left( \!\!\!
 \begin{array}{c} 
 \alpha(m,s^{n}) \geq 2^{n(L-\eta)}, \astarmsn=a, V^{n}(a,m)=v^{n}\\V^{n}(\hata,\hatm)=\hatv^{n},X^{n}(m,s^{n})=x^{n},
 \end{array} \!\!\!\right)
\nonumber \\
 \label{Eqn:QSTxProofAnalysisT22_7}
 &\leq&  \frac{2^{n(H(V|S)+3\eta)}p_{XV|S}(x^{n},v^{n}|s^{n})}{q^{-k+2n}2^{n(L-\eta)}},
 \end{eqnarray}
where \eqref{Eqn:QSTxProofAnalysisT22_7} follows from the same set of arguments that got us from \eqref{Eqn:QSTxProofAnalysisT21_5} to \eqref{Eqn:QSTxProofAnalysisT21_7}. Substituting the above upper bound in \eqref{Eqn:QSTxProofAnalysisT22_4}, we have
\begin{eqnarray}
\label{Eqn:QSTxProofAnalysisT22_8}
\!\!\!\!\!\!\!\!\!\!\!\!\!\!\Expectation\{\ttT_{22}\} \!\!\!&\leq&\!\!\! \frac{4\cdot 2^{n(H(V|S)+3\eta)} }{q^{-k+l+2n}\cdot 2^{n(L-\eta)}}\sum_{\substack{m,\hatm\\m\neq \hatm}}\sum_{\substack{(v^{n},s^{n})\\\in T_{\eta}^{n}(p_{VS})}}\!\sum_{\hata }\sum_{\substack{\hatv^{n} \in T_{\eta}^{n}(p_{V})\\x^{n}\in \CalX^{n}}}\!\!\!\!\!\!\tr(\pi^{Y}\pi_{\hatv^{n}}\pi^{Y}p_{XVS}(x^{n},v^{n},s^{n})\rho_{x^{n},s^{n}})
 \nonumber\\
 \label{Eqn:QSTxProofAnalysisT22_9}
 \lefteqn{\!\!\!\!\!\!\!\!\!\!\!\!\!\!\!\!\!\!\!\!\!\leq \frac{4\cdot 2^{n(H(V|S)+3\eta)} }{q^{-k+l+2n}\cdot 2^{n(L-\eta)}}\sum_{\substack{m,\hatm\\m\neq \hatm}}\sum_{\hata }\sum_{\substack{\hatv^{n} \in \\T_{\eta}^{n}(p_{V})}}\!\!\!\tr(\pi^{Y}\pi_{\hatv^{n}}\pi^{Y} \rho^{\otimes n}) = \frac{4\cdot 2^{n(H(V|S)+3\eta)} }{q^{-k+l+2n}\cdot 2^{n(L-\eta)}}\sum_{\substack{m,\hatm\\m\neq \hatm}}\sum_{\hata }\sum_{\substack{\hatv^{n} \in \\T_{\eta}^{n}(p_{V})}}\!\!\!\tr(\pi_{\hatv^{n}}\pi^{Y} \rho^{\otimes n}\pi^{Y}) }
 \\
 \label{Eqn:QSTxProofAnalysisT22_10}
 \!\!\!\!\!\!\!\!\!&\leq&\!\! \frac{4\cdot 2^{n(H(V|S)+3\eta)} }{q^{-k+l+2n}\cdot 2^{n(L-\eta)}}\sum_{\substack{m,\hatm\\m\neq \hatm}}\sum_{\hata }\sum_{\substack{\hatv^{n} \in \\T_{\eta}^{n}(p_{V})}}\!\!\!\tr(\pi_{\hatv^{n}}\pi^{Y})\leq \frac{4\cdot 2^{n(H(V|S)+3\eta)} }{q^{-k+l+2n}\cdot 2^{n(L-\eta)}}\sum_{\substack{m,\hatm\\m\neq \hatm}}\sum_{\hata }\sum_{\substack{\hatv^{n} \in \\T_{\eta}^{n}(p_{V})}}\!\!\!\tr(\pi_{\hatv^{n}})\\
 \label{Eqn:QSTxProofAnalysisT22_11}
 \!\!\!\!\!\!\!\!\!\!&\leq&\!\! \frac{4\cdot 2^{n(H(V|S)+3\eta)} }{q^{-k+l+2n}\cdot 2^{n(L-\eta)}}\sum_{\substack{m,\hatm\\m\neq \hatm}}\sum_{\hata }1 \leq \frac{4\cdot2^{n(H(V|S)-H(Y)+H(V,Y)+12\eta)} }{q^{2n-2k-l}\cdot 2^{n(L-\eta)}}\\
 \label{Eqn:QSTxProofAnalysisT22_12}
 \!\!\!\!\!\!\!\!\!\!&\leq&\!\! \exp\left\{ -n\left( L -\left[\frac{k\log q}{n}-\log q+H(V|S)\right]+\log q -H(V|Y)-\frac{k \log}{n} -\frac{l \log}{n}  \right)\right\}
 \end{eqnarray}
where \eqref{Eqn:QSTxProofAnalysisT22_9} follows from the operator inequality
\begin{eqnarray}
 \label{Eqn:QSTxProofAnalysisT21_13}
 \sum_{\substack{(v^{n},s^{n})\\\in T_{\eta}^{n}(p_{VS})}}\sum_{{x^{n}\in \CalX^{n}}} p_{XVS}(x^{n},v^{n},s^{n}) \rho_{x^{n},s^{n}} \leq \sum_{\substack{x^{n},s^{n},v^{n}\\ \in \CalX^{n}\times \CalS^{n}\times \CalV^{n}}}p_{XVS}(x^{n},v^{n},s^{n}) \rho_{x^{n},s^{n}} = \rho^{\otimes n}
 \nonumber
\end{eqnarray}
which follows from the positivity of the density operators, \eqref{Eqn:QSTxProofAnalysisT22_10} follows from the operator inequalities $\pi^{Y}\rho^{\otimes n}\pi^{Y} \leq 2^{-n[H(Y)-3\eta]}\pi^{Y}$ \cite[Property 15.1.3]{BkWilde_2017} and $\pi^{Y} \leq \bI$, \eqref{Eqn:QSTxProofAnalysisT22_11} follows from $\tr(\pi_{\hatv^{n}}) \leq 2^{n[H(Y|V)+3\eta]}$ for $\hatv^{n} \in T_{\eta}^{n}(p_{V})$\cite[Property 15.1.2]{BkWilde_2017} and $|T_{\eta}^{n}(p_{V})|\leq 2^{n[H(V)+3\eta]}$ and the last bound \eqref{Eqn:QSTxProofAnalysisT22_12} follows by collating all exponents.
\subsection{Collating Bounds and Characterization of a Single-letter achievable Rate Region}
\label{SubSec:QSTxProofCollatingBounds}
Through the above analysis, we have proved that for every choice of a finite field $\CalV=\fieldq$ and a PMF $p_{SVX}$ on $\CalS\times \CalV \times \CalX$, there exists a code of block length $n$ specified through an encoder $e$, a decoding POVM $\lambda$ consisting of $q^{l}$ codewords with error probability
\begin{eqnarray}
 \label{Eqn:QSTxProofFinalBounds1}
 \overline{\xi}(e,\lambda) &\leq& \exp\left\{-n\left( \frac{k\log q}{n}- \left[\log q -H(V|S)+L-\eta \right]  \right) \right\}\nonumber\\
  \label{Eqn:QSTxProofFinalBounds2}
 &&+\exp\left\{ -n\left( L -\left[\frac{k\log q}{n}-\log q+H(V|S)\right]+\log q -H(V|Y)-\frac{k \log}{n}  \right)\right\}\nonumber\\
  \label{Eqn:QSTxProofFinalBounds3}
 &&+\exp\left\{ -n\left( L -\left[\frac{k\log q}{n}-\log q+H(V|S)\right]+\log q -H(V|Y)-\frac{k \log}{n} -\frac{l \log}{n}  \right)\right\}
 \nonumber
\end{eqnarray}
if $L \geq \eta > 0$. By choosing $L \define \frac{k\log q}{n}- \log q +H(V|S)-4\eta \mbox{ we can guarantee } \overline{\xi}(e,\lambda) \leq 3\exp\left\{ -n\eta\right\}\mbox{ if }$
\begin{eqnarray}
   \label{Eqn:QSTxProofFinalBounds5}
 \frac{k\log q}{n}- \log q+H(V|S) > 5\eta >0\mbox{ and }\frac{(k+l)\log q}{n} < \log q -H(V|Y)
\end{eqnarray}
where are information quantities are computed with respect to the state defined in \eqref{Eqn:QSTxThmStatementQuantumState}. We therefore choose
\begin{eqnarray}
 \label{Eqn:QSTxProofFinalBounds6}
 \frac{k\log q}{n}= \log q-H(V|S)+5\eta \mbox{ and }\frac{l\log q}{n} > H(V|S) -H(V|Y)-5\eta = I(V;Y)-I(V;S)-5\eta
\end{eqnarray}
and guarantee $\overline{\xi}(e,\lambda) \leq 3\exp\left\{ -n\eta\right\}$. This completes the proof.
\end{proof}

\appendices
\section{Characterization of the Quantum States in evaluation of Information Quantities for Ex.~\ref{Ex:BDDQMSTx}}
\label{AppSec:BDDQMSTxExampleQuantumStates}
Consider Ex.~\ref{Ex:BDDQMSTx} for $\theta \in (0,\frac{\pi}{2})$. In this appendix, we provide characterization of the quantum state in \eqref{Eqn:QuantStateInIIDCodeRateRegion} for the choice $\CalU_{1}=\CalU_{2}=\{0,1\}$, $p_{U_{j}|S_{j}}(1|0)=p_{U_{j}|S_{j}}(0|1)=\tau = 1-p_{U_{j}|S_{j}}(0|0)=1-p_{U_{j}|S_{j}}(1|1)$ and $X_{j}=U_{j}\oplus S_{j}$ for $j \in [2]$, where $\oplus$ denotes addition mod$-2$. The characterizations below enable us compute the information quantities and thereby quantify the upper bound on the sum rate achievable via IID random codes. The latter is stated in our discussion prior to Sec.~\ref{SubSec:RoleOfAlgebraicCodes}. For the choice of parameters stated earlier, the quantum state in \eqref{Eqn:QuantStateInIIDCodeRateRegion} is
\begin{eqnarray}
 \sigma^{YS_{1}S_{2}X_{1}X_{2}U_{1}U_{2}} \!=\! \sum_{s_{1},s_{2}}\! \frac{\tau(1\!-\!\tau)}{4}\!\left[ \mathds{1}_{\left\{\substack{s_{1}\oplus s_{2}\\=0}\right\}}\ketbra{1}+\mathds{1}_{\left\{\substack{s_{1}\oplus s_{2}\\=1}\right\}}\ketbra{v_{\theta}} \right]\otimes\ketbra{s_{1}~s_{2}}\otimes \left[\!\!\begin{array}{c}\ketbra{0~1~s_{1}~1\oplus s_{2}}+\\\ketbra{1~0~1\oplus s_{1}~s_{2}} \end{array}\!\!\right]
 \nonumber\\
 + \sum_{s_{1},s_{2}}\left[ \mathds{1}_{\left\{\substack{s_{1}\oplus s_{2}\\=0}\right\}}\ketbra{0}+\mathds{1}_{\left\{\substack{s_{1}\oplus s_{2}\\=1}\right\}}\ketbra{v_{\theta}^{\perp}} \right]\otimes \ketbra{s_{1}~s_{2}}\otimes\left[\!\!\begin{array}{c}\frac{(1-\tau)^{2}}{4}\ketbra{0~0~s_{1}~s_{2}}+\\\frac{\tau^{2}}{4}\ketbra{1~1~1\oplus s_{1}~1\oplus s_{2}} \end{array}\!\!\right].
 \nonumber
\end{eqnarray}
Partial tracing over the appropriate component systems, we have
\begin{eqnarray}
 \sigma^{S_{1}S_{2}U_{1}U_{2}} \!\!&\!\!=\!\!&\!\! \sum_{s_{1},s_{2}}\frac{\tau(1-\tau)}{4}\left( \ketbra{s_{1}~s_{2}~1\oplus s_{1}~s_{2}} + \ketbra{s_{1}~s_{2}~s_{1}~1\oplus s_{2}}\right)\nonumber\\
 \!\!&\!\!\!\!&\!\!+  \sum_{s_{1},s_{2}}\frac{\tau^{2}}{4} \ketbra{s_{1}~s_{2}~1\otimes s_{1}~1\oplus s_{2}} + \sum_{s_{1},s_{2}}\frac{(1-\tau)^{2}}{4} \ketbra{s_{1}~s_{2}~ s_{1}~ s_{2}}\nonumber
 \mbox{ implying}\\
 \sigma^{S_{j}U_{j}} \!\!&\!\!=\!\!&\!\!\sum_{s_{j}}\frac{\tau(1-\tau)+\tau^{2}}{2}\ketbra{s_{j}~1\oplus s_{j}}+\sigma^{S_{j}U_{j}} =\sum_{s_{j}}\frac{\tau(1-\tau)+(1-\tau)^{2}}{2}\ketbra{s_{j}~ s_{j}}\nonumber\\
 \!\!&\!\!=\!\!&\!\!\frac{\tau}{2}\ketbra{0~1}+\frac{\tau}{2}\ketbra{1~0}+\frac{1-\tau}{2}\ketbra{0~0}+\frac{1-\tau}{2}\ketbra{1~1}\mbox{ for $j \in [2]$ and }\nonumber\\
 \sigma^{YU_{1}U_{2}} \!\!&\!\!=\!\!&\!\! \sum_{s_{1},s_{2}}\! \frac{\tau(1\!-\!\tau)}{4}\!\left[ \mathds{1}_{\left\{\substack{s_{1}\oplus s_{2}\\=0}\right\}}\ketbra{1}+\mathds{1}_{\left\{\substack{s_{1}\oplus s_{2}\\=1}\right\}}\ketbra{v_{\theta}} \right]\otimes \left[\!\!\begin{array}{c}\ketbra{s_{1}~1\oplus s_{2}}+\ketbra{1\oplus s_{1}~s_{2}} \end{array}\!\!\right]
 \nonumber\\
 \!\!&\!\!\!\!&\!\!+ \sum_{s_{1},s_{2}}\left[ \mathds{1}_{\left\{\substack{s_{1}\oplus s_{2}\\=0}\right\}}\ketbra{0}+\mathds{1}_{\left\{\substack{s_{1}\oplus s_{2}\\=1}\right\}}\ketbra{v_{\theta}^{\perp}} \right]\otimes \left[\!\!\begin{array}{c}\frac{(1-\tau)^{2}}{4}\ketbra{s_{1}~s_{2}}+\frac{\tau^{2}}{4}\ketbra{1\oplus s_{1}~1\oplus s_{2}} \end{array}\!\!\right]\nonumber\end{eqnarray}\begin{eqnarray}
 \!\!&\!\!=\!\!&\!\!\frac{2\tau(1-\tau)}{4}\ketbra{1}\otimes \left( \ketbra{0~1}+\ketbra{1~0}\right)+\frac{2\tau(1-\tau)}{4}\ketbra{v_{\theta}}\otimes \left( \ketbra{0~0}+\ketbra{1~1}\right)\nonumber\\
 \!\!&\!\!\!\!&\!\!+\left[ \frac{(1-\tau)^{2}+\tau^{2}}{4} \right]\left[ \ketbra{0}\otimes \left(\ketbra{0~0}+\ketbra{1~1} \right) +\ketbra{v_{\theta}^{\perp}}\left( \ketbra{0~1}+\ketbra{1~0}\right) \right]\mbox{ implying}
 \nonumber\\
\lefteqn{\!\!\!\!\!\!\!\!\!\!\!\!\!\!\!\!\!\!\!\!\!\!\!=\frac{\left( \epsilon \ketbra{1}+(1-\epsilon)\ketbra{v_{\theta}^{\perp}}\right)}{4}\otimes\left(\ketbra{0~1}+\ketbra{1~0}\right)+\frac{\left( \epsilon \ketbra{v_{\theta}}+(1-\epsilon)\ketbra{0}\right)}{4}\otimes\left(\ketbra{0~0}+\ketbra{1~1}\right)\mbox{ implying}}
 \nonumber\\
 \sigma^{Y}\!\!&\!\!=\!\!&\!\! \frac{\epsilon}{2} \ketbra{1}+\frac{(1-\epsilon)}{2}\ketbra{v_{\theta}^{\perp}} +\frac{\epsilon}{2} \ketbra{v_{\theta}}+\frac{(1-\epsilon)}{2}\ketbra{0},~~\sigma^{U_{1}U_{2}}= \frac{1}{4}\sum_{u_{1},u_{2}}\ketbra{u_{1}~u_{2}}\nonumber
\end{eqnarray}
where $\epsilon = 2\tau(1-\tau)$.

\section{Proof of Prop.~\ref{Prop:Step1ProoT1} : Bound on $T_{2}$}
\label{AppSec:BoundOnT2}
We begin by defining events
\begin{eqnarray}
 \lefteqn{\CalF_{1} \define \left\{\substack{V_{j}^{n}(a_{j},m_{j})=v_{j}^{n}:j \in [2],\ulineS^{n}=\ulines^{n}\\W^{n}(\hata,\hum)=\hatw^{n},W^{n}(\apl,\ulinem)=w^{n} }\right\}, \CalF_{2}\define \left\{ \substack{A_{j}(m_{j},s_{j}^{n})\\=a_{j}:j \in [2]}\right\}\cap\CalE}
 \nonumber\\
 &&\!\!\!\!\!\!\!\!\!\!\!\!\!\!\!\CalF_{3}\! \define \!\left\{ \!\substack{X_{j}^{n}(m_{j},s_{j}^{n})\\=x_{j}^{n}: j \in [2]} \!\right\}\!, \beta \define \left\{\!\substack{(v_{j}^{n},s_{j}^{n})\in T_{\eta}(p_{V_{j}S_{j}}\!) 
 \\ w^{n}=v_{1}^{n}\oplus_{q}v_{2}^{n},} \!\right\}\!, \omega \define \left\{ \substack{w^{n} \in T_{\eta}(p_{W})\\\hatw^{n} \in T_{\eta}(p_{W})}\right\}.
\nonumber
 \end{eqnarray}
From the definition of $\apl$ and the distribution of the random code, we have
\begin{eqnarray}
\label{Eqn:T2Events-1}
 \lefteqn{P(\CalF_{1}\cap\CalF_{2}\cap\CalF_{3})\mathds{1}_{\beta}\mathds{1}_{\omega} 
 \leq P(\CalF_{1}) P(\CalF_{3}|\CalF_{1}\cap\CalF_{2})\mathds{1}_{\beta}\mathds{1}_{\omega}} \\
 \label{Eqn:T2Events-2}
 &&\!\!\!\!\!\!\!\!\!\!\!\leq  \frac{1}{q^{3n}}\ttp_{\ulineS}(\ulines^{n})\prod_{j=1}^{2}p_{X_{j}|V_{j}S_{j}}^{n}(x_{j}^{n}|v_{j}^{n},s_{j}^{n})\mathds{1}_{\beta}\mathds{1}_{\omega}
 \\
 \label{Eqn:T2Events-3}
 &&\!\!\!\!\!\!\!\!\!\!\!\leq \frac{2^{n(H(V_{1}|S_{1})+2\eta)}}{q^{3n}2^{-n(H(V_{2}|S_{2}))}}\ttp_{\ulineS}(\ulines^{n})\!\prod_{j=1}^{2}p_{X_{j}V_{j}|S_{j}}^{n}\!(x_{j}^{n},v_{j}^{n}|s_{j}^{n})\mathds{1}_{\beta}\mathds{1}_{\omega}\\
 \label{Eqn:T2Events-4}
 &&\!\!\!\!\!\!\!\!\!\!\!= \Theta p_{\ulineS\ulineV\ulineX}^{n}(\ulines^{n},\ulinev^{n},\ulinex^{n})\mathds{1}_{\beta}\mathds{1}_{\omega}\mbox{ where } \Theta \define \frac{2^{n(H(V_{1}|S_{1})+2\eta)}}{q^{3n}2^{-n(H(V_{2}|S_{2}))}}
\end{eqnarray}
where \eqref{Eqn:T2Events-2} folows from the property of a uniformly distributed UCC proven in \cite[Lemma 9 in Appendix E]{201710TIT_PadPra}, \eqref{Eqn:T2Events-3} follows from the presence of $\mathds{1}_{\beta}$ in the factors. The term corresponding to $T_{2}$ from \eqref{Eqn:ProofStep1ErrAnalysis-11}, \eqref{Eqn:ProofStep1ErrAnalysis-1} in $\Expectation\{\xi(\ulinee,\lambda) \}$ is
\begin{eqnarray}
\label{Eqn:BoundOnT2_1}
 \lefteqn{\sum_{\ulines^{n}}\!\ttp_{\ulineS}(\ulines^{n}\!)\Expectation\{ T_{2}\} \!=\!\!\!\!\!\!\!\!\!\!\!\!\!\!\!\!\!\sum_{\substack{a_{1},a_{2},\ulinev^{n},\ulines^{n},\ulinex^{n},w^{n}\\(\hata,\hum) \neq (\apl,\ulinem),\hatw^{n} }}\!\!\!\!\!\!\!\!\!\!\!\!\!\!\!\!\! \tr(\pi^{Y}\!\pi_{\hatw^{n}}\pi^{Y}\!\rho_{\ulinex^{n},\ulines^{n}})P\left(\!\bigcap_{k=1}^{3}\CalF_{k}\!\!\right)\mathds{1}_{\beta}\mathds{1}_{\omega}} \nonumber\\
 \label{Eqn:BoundOnT2_2}
 &&\!\!\!\!\!\!\!\!\!\!\!\leq\Theta\!\!\!\!\!\!\!\sum_{\substack{a_{1},a_{2},w^{n},\hatw^{n}\\(\hata,\hum) \neq (\apl,\ulinem) }} \!\!\!\!\!
 \!\!\!\!\!\!\!\mathds{1}_{\omega} \tr(\!\!\pi^{Y}\!\pi_{\hatw^{n}}\pi^{Y}\!\!\!\!\!\sum_{\substack{\ulinev^{n},\ulines^{n},\ulinex^{n} }}\!\!\!\!\!\!p_{\ulineS\ulineV\ulineX}^{n}(\ulines^{n},\ulinev^{n},\ulinex^{n})\rho_{\ulinex^{n},\ulines^{n}}\mathds{1}_{\beta}\!\!)\\
 \label{Eqn:BoundOnT2_3}
 &&\!\!\!\!\!\!\!\!\!\!\!\leq\Theta\!\!\!\!\!\!\!\sum_{\substack{a_{1},a_{2},w^{n},\hatw^{n}\\(\hata,\hum) \neq (\apl,\ulinem) }} \!\!\!\!\!
 \!\!\!\!\!\!\!\mathds{1}_{\omega} \tr(\pi^{Y}\!\pi_{\hatw^{n}}\pi^{Y}p_{W}^{n}(w^{n})\rho_{w^{n}})\\
 \label{Eqn:BoundOnT2_4}
 &&\!\!\!\!\!\!\!\!\!\!\!\leq \Theta\!\!\!\!\!\!\!\sum_{\substack{a_{1},a_{2},\hatw^{n}\\(\hata,\hum) \neq (\apl,\ulinem) }} \!\!\!\!\!
 \!\!\!\!\!\!\!\mathds{1}_{\omega} \tr(\pi^{Y}\!\pi_{\hatw^{n}}\pi^{Y}\rho^{\otimes n})=\Theta\!\!\!\!\!\!\!\!\!\sum_{\substack{a_{1},a_{2},\hatw^{n}\\(\hata,\hum) \neq (\apl,\ulinem) }} \!\!\!\!\!
 \!\!\!\!\!\!\!\mathds{1}_{\omega} \tr(\pi_{\hatw^{n}}\pi^{Y}\rho^{\otimes n}\pi^{Y}) \nonumber \\
 \label{Eqn:BoundOnT2_5}
 &&\!\!\!\!\!\!\!\!\!\!\!\leq \frac{\Theta}{2^{nH(Y)_{\sigma}}}\!\!\!\!\!\!\!\sum_{\substack{a_{1},a_{2},\hatw^{n}\\(\hata,\hum) \neq (\apl,\ulinem) }} \!\!\!\!\!
 \!\!\!\!\!\!\!\mathds{1}_{\omega} \tr(\pi_{\hatw^{n}}\pi^{Y})=\frac{\Theta q^{k_{1}+2k_{2}}2^{n(H(Y,W)_{\sigma}+6\eta)}}{2^{nH(Y)_{\sigma}}q^{-l_{1}-l_{2}}}\\
 \label{Eqn:BoundOnT2_6}
 &&\!\!\!\!\!\!\!\!\!\!\!\leq \exp\left\{-n\left({3\log q ~-~ {H(W|Y)_{\sigma}+\sum_{i=1}^{2}H(U_{i}|S_{i})_{\sigma}} - \frac{k_{1}+2k_{2}+l_{1}+l_{2}}{n}} \log q -8\eta \right) \right\}\nonumber
\end{eqnarray}
where \eqref{Eqn:BoundOnT2_2} follows by substituting the upper bound \eqref{Eqn:T2Events-4}, \eqref{Eqn:BoundOnT2_3} follows from averaging the density operators and the fact that density operators are positive, \eqref{Eqn:BoundOnT2_4} follows again by averaging and cyclicity of the trace, \eqref{Eqn:BoundOnT2_5} follows from the operator inequality $\pi^{Y}\rho^{\otimes n}\pi^{Y} \leq 2^{-n(H(Y)_{\sigma}-2\eta)}\pi^{Y}$ and the fact that for typical $\hatw^{n}$, we have $\tr(\pi_{\hatw^{n}}\pi^{Y}) \leq \tr(\pi_{\hatw^{n}})\leq 2^{nH(Y|W)+2n\eta}$ and the last inequality follows by substituting the value of $\Theta$ from \eqref{Eqn:T2Events-4}. We obtained the above bound on $\frac{k_{1}+2k_{2}+l_{1}+l_{2}}{n} \log q$ since we have assumed $k_{2} \geq k_{1}$. In general, we obtain the bound 
\begin{eqnarray}
 \sum_{\ulines^{n}}\ttp_{\ulineS}(\ulines^{n}\!)\Expectation\{ T_{2}\} \leq \exp\left\{-n\left({3\log q ~-~ {H(W|Y)_{\sigma}+\sum_{i=1}^{2}H(U_{i}|S_{i})_{\sigma}} - \frac{k_{1}+k_{2}+\max\{ k_{1},k_{2}\}+l_{1}+l_{2}}{n}} \log q -8\eta \right) \right\}\nonumber
\end{eqnarray}

\bibliographystyle{IEEEtran}
{
\bibliography{/home/arunpr/work/qislBib/qisl}

\begin{thebibliography}{10}
\providecommand{\url}[1]{#1}
\csname url@rmstyle\endcsname
\providecommand{\newblock}{\relax}
\providecommand{\bibinfo}[2]{#2}
\providecommand\BIBentrySTDinterwordspacing{\spaceskip=0pt\relax}
\providecommand\BIBentryALTinterwordstretchfactor{4}
\providecommand\BIBentryALTinterwordspacing{\spaceskip=\fontdimen2\font plus
\BIBentryALTinterwordstretchfactor\fontdimen3\font minus
  \fontdimen4\font\relax}
\providecommand\BIBforeignlanguage[2]{{%
\expandafter\ifx\csname l@#1\endcsname\relax
\typeout{** WARNING: IEEEtran.bst: No hyphenation pattern has been}%
\typeout{** loaded for the language `#1'. Using the pattern for}%
\typeout{** the default language instead.}%
\else
\language=\csname l@#1\endcsname
\fi
#2}}

\bibitem{1980MMPCIT_GelPin}
S.~I. Gel'fand and M.~S. Pinsker, ``Capacity of a broadcast channel with one
  deterministic component,'' \emph{Probl. Pered. Inform.}, vol.~16, no.~1, pp.
  24--34, Jan.-Mar. 1980, ; translated in \emph{Probl. Inform. Transm.}, vol.
  16, no. 1, pp. 17-25, Jan.-Mar. 1980.

\bibitem{201710TIT_PadPra}
A.~Padakandla and S.~S. Pradhan, ``An {A}chievable {R}ate {R}egion {B}ased on
  {C}oset {C}odes for {M}ultiple {A}ccess {C}hannel {W}ith {S}tates,''
  \emph{IEEE Transactions on Information Theory}, vol.~63, no.~10, pp.
  6393--6415, Oct 2017.

\bibitem{BkNIT_PraPadShi}
\BIBentryALTinterwordspacing
S.~S. Pradhan, A.~Padakandla, and F.~Shirani, ``An algebraic and probabilistic
  framework for network information theory,'' \emph{Foundations and Trends® in
  Communications and Information Theory}, vol.~18, no.~2, pp. 173--379, 2020.
  [Online]. Available: \url{http://dx.doi.org/10.1561/0100000083}
\BIBentrySTDinterwordspacing

\bibitem{202107ISIT_AnwPadPra3CQIC}
T.~A. Atif, A.~Padakandla, and S.~S. Pradhan, ``{A}chievable rate-region for
  3—{U}ser {C}lassical-{Q}uantum {I}nterference {C}hannel using {S}tructured
  {C}odes,'' in \emph{2021 IEEE International Symposium on Information Theory
  (ISIT)}, 2021, pp. 760--765.

\bibitem{202107ISIT_AnwPadPraComMAC}
------, ``Computing {S}um of {S}ources over a {C}lassical-{Q}uantum
  {M}{A}{C},'' in \emph{2021 IEEE International Symposium on Information Theory
  (ISIT)}, 2021, pp. 414--419.

\bibitem{197903TIT_KorMar}
J.~Korner and K.~Marton, ``How to encode the modulo-two sum of binary sources
  (corresp.),'' \emph{IEEE Transactions on Information Theory}, vol.~25, no.~2,
  pp. 219--221, 1979.

\bibitem{200710TIT_NazGas}
B.~Nazer and M.~Gastpar, ``Computation over multiple-access channels,''
  \emph{IEEE Transactions on Information Theory}, vol.~53, no.~10, pp.
  3498--3516, 2007.

\bibitem{201804TIT_PadPra}
A.~{Padakandla} and S.~S. {Pradhan}, ``{A}chievable {R}ate {R}egion for {T}hree
  {U}ser {D}iscrete {B}roadcast {C}hannel {B}ased on {C}oset {C}odes,''
  \emph{IEEE Transactions on Information Theory}, vol.~64, no.~4, pp.
  2267--2297, April 2018.

\bibitem{201603TIT_PadSahPra}
A.~Padakandla, A.~G. Sahebi, and S.~S. Pradhan, ``An {A}chievable {R}ate
  {R}egion for the {T}hree-{U}ser {I}nterference {C}hannel {B}ased on {C}oset
  {C}odes,'' \emph{IEEE Transactions on Information Theory}, vol.~62, no.~3,
  pp. 1250--1279, March 2016.

\bibitem{202202arXiv_Pad3CQBC}
\BIBentryALTinterwordspacing
A.~Padakandla, ``An achievable rate region for $3-$user classical-quantum
  broadcast channels,'' 2022. [Online]. Available:
  \url{https://arxiv.org/abs/2203.00110}
\BIBentrySTDinterwordspacing

\bibitem{201706ISIT_HeiShiPra}
M.~Heidari, F.~Shirani, and S.~S. Pradhan, ``A new achievable rate region for
  multiple-access channel with states,'' in \emph{2017 IEEE International
  Symposium on Information Theory (ISIT)}, 2017, pp. 36--40.

\bibitem{201910TIT_HeiShiPra}
------, ``Quasi structured codes for multi-terminal communications,''
  \emph{IEEE Transactions on Information Theory}, vol.~65, no.~10, pp.
  6263--6289, 2019.

\bibitem{195804IBMJRD_Sha}
C.~E. Shannon, ``Channels with side information at the transmitter,'' \emph{IBM
  Journal of Research and Development}, vol.~2, no.~4, pp. 289--293, 1958.

\bibitem{197905TIT_Mar}
K.~{Marton}, ``A coding theorem for the discrete memoryless broadcast
  channel,'' \emph{IEEE Transactions on Information Theory}, vol.~25, no.~3,
  pp. 306--311, May 1979.

\bibitem{201604JPA_BocCaiNot}
\BIBentryALTinterwordspacing
H.~Boche, N.~Cai, and J.~Nötzel, ``The classical-quantum channel with random
  state parameters known to the sender,'' \emph{Journal of Physics A:
  Mathematical and Theoretical}, vol.~49, no.~19, p. 195302, apr 2016.
  [Online]. Available: \url{https://doi.org/10.1088/1751-8113/49/19/195302}
\BIBentrySTDinterwordspacing

\bibitem{201310JPA_Not}
J.~N\"otzel, ``Hypothesis testing on invariant subspaces of the symmetric
  group, part i - quantum sanov's theorem and arbitrarily varying sources,''
  \emph{Journal of Physics A: Mathematical and Theoretical}, vol.~47, 10 2013.

\bibitem{201206TIT_FawHaySavSenWil}
O.~{Fawzi}, P.~{Hayden}, I.~{Savov}, P.~{Sen}, and M.~M. {Wilde}, ``Classical
  communication over a quantum interference channel,'' \emph{IEEE Transactions
  on Information Theory}, vol.~58, no.~6, pp. 3670--3691, 2012.

\bibitem{201512TIT_SavWil}
I.~Savov and M.~M. Wilde, ``Classical codes for quantum broadcast channels,''
  \emph{IEEE Transactions on Information Theory}, vol.~61, no.~12, pp.
  7017--7028, 2015.

\bibitem{200906TIT_PhiZam}
T.~Philosof and R.~Zamir, ``On the loss of single-letter characterization: The
  dirty multiple access channel,'' \emph{IEEE Trans. on Info. Th.}, vol.~55,
  pp. 2442--2454, June 2009.

\bibitem{197401TIT_Wyn}
A.~Wyner, ``Recent results in the shannon theory,'' \emph{Information Theory,
  IEEE Transactions on}, vol.~20, no.~1, pp. 2 -- 10, Jan 1974.

\bibitem{200307TIT_HayNag}
M.~Hayashi and H.~Nagaoka, ``General formulas for capacity of classical-quantum
  channels,'' \emph{IEEE Transactions on Information Theory}, vol.~49, no.~7,
  pp. 1753--1768, 2003.

\bibitem{BkWilde_2017}
M.~M. Wilde, \emph{Quantum Information Theory}, 2nd~ed.\hskip 1em plus 0.5em
  minus 0.4em\relax Cambridge University Press, 2017.

\end{thebibliography}
\end{document}